\documentclass[11pt]{article}

\usepackage[a4paper,margin=1in]{geometry}
\usepackage{amsmath,amssymb,amsthm}
\usepackage{mathtools}
\usepackage[authoryear,round]{natbib}
\usepackage[colorlinks=true,
            linkcolor=blue,
            citecolor=blue,
            urlcolor=blue]{hyperref}
\newtheorem{theorem}{Theorem}
\newtheorem{lemma}{Lemma}
\newtheorem{proposition}{Proposition}

\title{Minimizing the Makespan Approximately on Two Identical Parallel Machines with a Loading--Unloading Server}

\author{%
Keramat Hasani  \\[1ex]
Centre for Maritime Studies, National University of Singapore, Singapore  \\[1ex]
email: hasani@nus.edu.sg  \\[3ex]
  Frank Werner  \\[1ex]
Otto-von-Guericke Business School Magdeburg, Germany \\[1ex]
  email: frank.werner@ovgu.de; ORCID: 0000-0002-0709-3591
}
 
\vspace{2mm}

\date{August 19, 2026}

\begin{document}

\maketitle

\begin{abstract}
We study makespan minimisation on two identical parallel machines that share a
single server for both loading and unloading. Each job must be loaded, processed
without interruption on its assigned machine, and unloaded immediately after
processing, with a common positive integer duration for all loading and unloading
operations. We prove that the decision problem is NP-complete for every fixed
server-operation duration and strongly NP-complete when this duration is part of
the input. We then analyse ordinary list scheduling and the longest-processing-time
rule in the non-unit setting. List scheduling has a tight supremum ratio of two.
For the longest-processing-time rule, we obtain the exact worst-case ratio when
all processing times are at least the server-operation duration, and derive new
parameter-dependent lower and upper bounds for unrestricted instances. The results
show that both processing-time granularity and blocking generated by short jobs
shape the approximation behaviour of the common-server problem.
\end{abstract}
\noindent\textbf{Keywords:} Scheduling; Parallel machines; Common server;
Loading--unloading operations; Approximation algorithms.

\vspace{1em}

\section{Introduction}
\label{sec:introduction}

Classical scheduling on identical parallel machines assumes that a job can start as
soon as an assigned machine becomes available. In many production and logistics
systems, this assumption is incomplete. A job may require an auxiliary resource
before the processing can begin, and a shared resource may also be required after
the processing is completed. 

Examples arise in automated manufacturing cells, robotic material-handling systems,
semiconductor production, and service systems in which a robot, operator, crane, or
vehicle prepares and clears several processing positions. A stylized maritime analogue
is a multi-berth port with a one-way access channel: a vessel uses the channel before
berth service and again after the service when it departs, while the berth remains occupied
between these two channel operations. In such settings, machine availability alone does
not determine feasibility; the auxiliary resource must also be available at the required
times.

We study a two-machine scheduling problem with one common loading--unloading
server. Each job \(J_j\) consists of three consecutive operations: a loading
operation of length \(s\), a processing operation of length \(p_j\), and an
unloading operation of length \(s\). The loading and unloading operations are
non-preemptive and require the common server. Processing is performed on one of
the two machines. No idle time is allowed within a job block: processing starts
immediately after loading, and unloading starts immediately after processing.
Thus, once a job starts loading on a machine, that machine remains occupied until
the job has been unloaded. The execution length of job \(J_j\) on its assigned
machine is \(e_j=p_j+2s\).
We consider the equal loading--unloading case \(s_j=t_j=s\), where \(s\) is a common positive integer. The objective is to minimise the
makespan. In the standard notation, the problem is denoted by \(P2,S1\mid s_j=t_j=s\mid C_{\max}\).

The equality \(s_j=t_j=s\) gives the model a symmetric form, but it does not make
the problem a routine extension of the classical two-machine makespan problem.
The server is needed twice by every job, once before processing and once after
processing. Consequently, the feasibility of inserting a job depends not only on
its loading interval but also on its future unloading interval. This second server
requirement creates blocking patterns in which a machine may be available, but a
job cannot be inserted because either its loading operation or its unloading
operation would conflict with a server operation already scheduled for another
job.

The common value \(s\) also controls the temporal granularity of the model.
Scaling all times by \(s\) would normalize the loading and unloading times to one,
but it would not preserve the integral problem class: the normalized processing
times would belong to the lattice
\[
\left\{\frac{1}{s},\frac{2}{s},\frac{3}{s},\ldots\right\}.
\]
Thus, different integer values of \(s\) correspond to different admissible normalized
time grids. In particular, for \(s=1\), positive integral processing times imply
\(p_j/s\ge1\), whereas for \(s\ge2\) normalized processing times below one become
admissible.

This granularity creates a structural distinction between short and long jobs. If
\(e_j<3s\), equivalently \(p_j<s\), the processing interval between loading and
unloading is shorter than one server operation. No non-preemptive server operation
of length \(s\) can fit inside this interval, and a short job may therefore initiate
a serial portion of the schedule. For \(s=1\) and positive integer processing
times, this short-job mechanism is absent. The LPT analysis below nevertheless
shows that the departure from the unit-time case cannot be attributed to short jobs
alone: even on the subclass \(p_j\ge s\), the exact ratio depends on \(s\). Hence
the non-unit problem is not merely a rescaled copy of the unit-time case; the
analysis must account for both the finer admissible processing-time grid within the
long-job regime and the additional blocking caused by short jobs.

We analyse two basic list scheduling-based rules. In both rules, when a job is considered,
it is assigned to the machine with the smaller current completion time (with ties
resolved by a fixed machine index), appended after the jobs already assigned to
that machine, and started at the earliest time not earlier than that machine's
current completion at which both its loading and its immediate unloading are
server-feasible. Jobs are therefore never inserted into internal machine gaps.
Ordinary list scheduling (LS) considers jobs in the given input order. The
longest-processing-time rule (LPT) first orders the jobs by non-increasing processing
times \(p_j\), equivalently by non-increasing execution lengths \(e_j=p_j+2s\), and
then applies the same append rule. In the present setting, the behaviour of these
rules is governed by both machine-load imbalance and server-induced blocking.


The relevant literature can be grouped into four strands: classical parallel-machine
approximation, setup-time scheduling, common-server parallel-machine scheduling,
and loading--unloading server models.

The classical reference point is the identical-parallel-machine problem
\(Pm\mid\mid C_{\max}\). List scheduling and LPT are among the most fundamental
rules for this model. Graham's papers on multiprocessing anomalies and worst-case
bounds established the standard performance-analysis framework for list-based
scheduling rules, see \cite{graham1966,graham1969}. In the classical model, each job
occupies one machine only during processing. In the present model, the assigned
machine is occupied from the start of loading until the completion of unloading,
and the common server must be scheduled both before and after processing. Thus,
classical load-based arguments remain useful as lower bounds, but the analysis
must also account for conflicts on the server.

Setup operations have  been used for a long time to model preparation, cleaning, calibration,
changeover, and material-handling activities that must occur before processing.
\cite{allahverdi1999} reviewed early scheduling research involving setup
considerations. Later surveys \citep{allahverdi2008,allahverdi2015} classified a
large body of work on setup times and setup costs. Most setup-time models attach setup
operations to the processing machine or to machine changeovers. Common-server
models are structurally different because the auxiliary operation uses a resource
shared across the machines; jobs assigned to different machines may therefore block
one another even when their processing intervals do not overlap.

Early work by \cite{kravchenko1997}, 
\cite{hall2000}, and \cite{brucker2002} established structural and
complexity results for single-server parallel-machine models. Later work focused
on two-machine cases, solvable special cases, heuristics, and exact formulations, see 
\cite{abdekhodaee2002,abdekhodaee2004,abdekhodaee2006,gan2012,kim2012,hasani2014}.
These papers show that introducing a common loading server already changes the
structure of classical parallel-machine scheduling, even before unloading
operations are considered. 

Preemptive and online variants of common-server scheduling have also been studied.
\cite{jiang2013} considered a preemptive two-machine problem with a
single setup server. \cite{jiang2015_online} studied online
algorithms for scheduling on two parallel machines with a single server. These
papers analyse server-induced blocking under different assumptions, but they do
not address the offline non-preemptive loading--unloading block structure 
considered here.

The closest literature concerns loading--unloading common-server models. In these
models, each job requires a server service before and after machine processing.
\cite{xie2012} considered a related single-server loading/removal formulation in
which an  idle time may occur after processing before removal; this differs from the
immediate-unloading requirement imposed here. \cite{jiang2014_preemptive} studied
a preemptive version of the single-server parallel-machine problem with loading and
unloading times and proposed an optimal preemptive algorithm.
\cite{jiang2015_loading_unloading} studied the closely related non-preemptive
two-machine problem with unit loading and unloading times and analysed Algorithms  LS and LPT
for that unit-time case. These papers provide the nearest worst-case-analysis
benchmarks for the present work. In addition, we mention that in a recent parallel preprint, the problem
with $m \ge 3$ identical parallel machines and unit-time loading and unloading operations has been addressed by 
\cite{hasani2026}, where NP-completeness has been proven and worst case ratios have been derived for Algorihms LS and LPT.

The general non-preemptive two-machine loading--unloading problem with a single
server was studied by \cite{elidrissi2024}. In their model,
loading and unloading times may be job-dependent. They proposed mixed-integer
programming formulations, identified polynomially solvable cases, derived a tight
lower bound, and developed a general variable neighborhood search for larger
instances. Recent work has broadened the algorithmic literature further.
\cite{elidrissi2022_twoservers} studied a model with separate
common loading and unloading servers. \cite{druetto2025}
developed exact approaches for loading--unloading server problems on an arbitrary
number of identical machines. \cite{elidrissi2026_reduction}
studied a related dedicated-loading and dedicated-unloading server variant through
a reduction approach.

The equal-\(s\) case lies between the unit loading--unloading model and the fully
job-dependent loading--unloading model. To the best of our knowledge, the literature
identified above does not establish worst-case guarantees for Algorithms LS or LPT  for the
two-machine equal-\(s\), immediate-unloading single-server setting considered here.
The analysis below shows that this symmetric case still has a nontrivial complexity
and requires structural arguments that are absent from both classical
parallel-machine scheduling and the unit loading--unloading case. In this way, this 
paper extends the worst-case performance theory of loading--unloading
common-server scheduling beyond the unit-time setting while preserving enough
symmetry to permit a detailed structural analysis.

In this study, the contributions are threefold. First, we establish two complexity
results for problem \(P2,S1\mid s_j=t_j=s\mid C_{\max}\). For every fixed integer
\(s\ge1\), the decision version is NP-complete. In addition,
when the common loading--unloading time \(s\) is part of the input, the problem is
strongly NP-complete.

Second, for every integer \(s\ge2\), we establish the exact supremum performance
ratio of ordinary list scheduling: \(\rho_{\mathrm{LS}}(s)=2\). The upper bound follows from a comparison between the LS makespan and a lower
bound based on the total execution length. The matching lower-bound family for Algorithm LS
uses short jobs to force the rule to behave almost serially before a large job is
inserted.

Third, we obtain two results for Algorithm LPT. For the long-job subclass
\(p_j\ge s\), we determine the exact worst-case ratio
\[
\rho_{\mathrm{LPT}}^{\ge s}(s)=\frac{11s-3}{8s-2}.
\]
The upper-bound proof uses a reset-block decomposition and an amortized analysis of
two-job bridge blocks; the technical part is given in the appendix. For unrestricted
instances, a prefix--tail decomposition and a convex combination of the machine-load
and short-job lower bounds yield
\[
\rho_{\mathrm{LPT}}(s)\le\frac{13s-2}{8s}.
\]
A four-job construction gives
\[
\rho_{\mathrm{LPT}}(s)\ge\frac{11s+2}{8s+1}.
\]
Thus, Algorithm  LPT has a strictly smaller worst-case upper bound than arbitrary list
scheduling, while the exact unrestricted supremum ratio remains open.

The rest of this paper is organized as follows.
Section~\ref{sec:prelim} gives the formal problem definition and basic lower
bounds. Section~\ref{sec:complexity} establishes the complexity results, including
NP-completeness for every fixed \(s\ge1\) and strong NP-completeness when
\(s\) is part of the input. Section~\ref{sec:ls_equal_s} analyses ordinary list
scheduling. Section~\ref{sec:lpt_equal_s} develops the LPT block structure, establishes the
exact ratio for the long-job subclass, derives the unrestricted upper bound, and
presents the LPT lower-bound construction. The technical proof of the long-job
upper bound is deferred to the appendix.
Finally, Section~\ref{sec:conclusion} summarises the main findings and discusses
some remaining open questions.

\section{Problem definition and preliminaries}
\label{sec:prelim}

We consider problem \(P2,S1\mid s_j=t_j=s\mid C_{\max}\), where \(s\) is a common
positive integer loading--unloading time. There are two identical parallel machines
and one common server. Each job \(J_j\) consists of a loading operation of length
\(s\), a processing operation of length \(p_j\), and an unloading operation of
length \(s\). All loading and unloading operations are non-preemptive and require
the common server. Processing requires only the assigned machine. We assume that
all processing times \(p_j\) are positive integers. For fixed machine and server
orders, feasibility is described by difference constraints with integral right-hand
sides. Whenever such orders are feasible, their earliest realization is integral; an
optimal schedule may therefore be chosen on the common integer time grid.

The performance-ratio analysis in Sections~\ref{sec:ls_equal_s} and
\ref{sec:lpt_equal_s} focuses on fixed integers \(s\ge2\). The complexity section
explicitly distinguishes the fixed-\(s\) and variable-\(s\) regimes.

The execution time of job \(J_j\) is \(e_j=p_j+2s\ge2s+1\).
All intervals are half-open. Thus, \([a,b)\) and \([b,c)\) are adjacent but do not
overlap.

Let \(E=\sum_{j=1}^n e_j\) and \(e_{\max}=\max_j e_j\).
For a scheduling rule \(\mathcal A\), let \(C_{\mathcal A}\) be its makespan and let
\(C^*\) be the optimal makespan. We write \(\rho_{\mathcal A}(s)\) for the
worst-case performance bound, understood as the supremum over all finite instances
with the fixed value of \(s\).

\begin{lemma}[Basic lower bounds]
\label{lem:basic_lb_s}
For every instance of problem  \(P2,S1\mid s_j=t_j=s\mid C_{\max}\), we have
\[
C^*\ge
\max\left\{
e_{\max},
\frac{E+2s}{2},
2sn
\right\}.
\]
\end{lemma}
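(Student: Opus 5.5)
Each of the three quantities is a lower bound on \(C^*\), so I would prove them separately for an arbitrary feasible schedule \(\sigma\) with makespan \(C\).

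\textbf{The bound \(e_{\max}\).} This is immediate. Every job occupies its machine continuously for \(e_j\) time units, from the start of loading to the end of unloading. Since all activity starts at time \(0\) or later, we get \(C\ge e_j\) for every \(j\).

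\textbf{The bound \(2sn\).} This also follows directly. The server performs \(2n\) non-preemptive operations of length \(s\). These operations are pairwise disjoint and all lie in \([0,C)\), so \(C\ge 2sn\).

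\textbf{The bound \((E+2s)/2\).} This is the step needing an argument.
\begin{itemize}
\item Let \(M_1\) and \(M_2\) be the two machines, with total assigned execution lengths \(L_1+L_2=E\). The jobs on a machine occupy disjoint intervals, so \(C\ge L_i\) on each machine. This alone gives only \(C\ge E/2\).
\item To gain the extra \(s\), I would examine the beginning and the end of the schedule. The first job loaded, say at time \(a\ge0\) on machine \(M_1\), blocks the server on \([a,a+s)\). If the other machine has any job, that job's loading cannot overlap \([a,a+s)\) and it starts later than the first loading. Hence \(M_2\) is idle on \([0,a+s)\), giving \(C\ge L_2+s\).
\item Symmetrically, consider the last unloading, which ends at \(C\) on some machine. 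The other machine must finish its last unloading by \(C-s\), because the server is busy on \([C-s,C)\). That machine therefore finishes its work by \(C-s\).
\item I would split into cases according to which machine carries the first loading and which carries the last unloading. If both are on \(M_1\), then \(M_2\) is idle on \([0,s)\) and on \([C-s,C)\), so \(C\ge L_2+2s\). If they are on different machines, each machine loses \(s\), giving \(C\ge L_1+s\) and \(C\ge L_2+s\).
\item In every case, adding the two per-machine inequalities gives \(2C\ge L_1+L_2+2s=E+2s\).
\end{itemize}

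\textbf{Degenerate case.} If one machine receives no jobs, then \(C\ge E\ge (E+2s)/2\), because \(E\ge 2s+1>2s\). The same reasoning covers \(n=1\).

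\textbf{Main obstacle.} The main care is in this last case split. In particular, I must verify that "idle on \([0,a+s)\)" is correct even when \(a>0\): the machine is then idle on \([0,a+s)\), which is a stronger statement than idleness on \([0,s)\). The bound only improves in that case.
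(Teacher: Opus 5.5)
Your proof is correct and follows essentially the same route as the paper. The $e_{\max}$ and $2sn$ bounds are immediate, and the extra $s$ at each end comes from the other machine being idle during the first loading and during the last unloading; the paper counts this as aggregate idle time, $2C^*-E\ge 2s$, over these two disjoint server intervals, which avoids your case split on which machine carries each operation, though your per-machine bookkeeping is equally valid.
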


\begin{proof}
The bound \(C^*\ge e_{\max}\) is immediate. The server must perform \(2n\)
operations, each of length \(s\), hence \(C^*\ge2sn\).

It remains to prove the machine-load bound. The total machine occupation is \(E\).
Over a schedule of length \(C^*\), the aggregate machine-idle time is
\(2C^*-E\). During the first loading interval, at most one machine is occupied,
which contributes at least \(s\) units of aggregate idle time. The same is true
during the final unloading interval. Hence
\[
2C^*-E\ge2s,
\]
and therefore
\[
C^*\ge \frac{E+2s}{2}.
\]
\end{proof}

\begin{lemma}[Short-job lower bounds]
\label{lem:short_job_lower_bounds}
Let \(\mathcal J_<:=\{j:e_j<3s\}\). Then
\[
C^*\ge \sum_{j\in \mathcal J_<} e_j
\]
and
\[
C^*\ge 2sn+\sum_{j\in \mathcal J_<}(e_j-2s).
\]
\end{lemma}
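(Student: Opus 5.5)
The plan is to exploit one structural fact: during the execution window of a short job, no server operation of any other job can take place. Fix an optimal schedule, which is feasible and otherwise arbitrary. For a job \(J_j\) starting at time \(S_j\), the loading occupies the server on \([S_j,S_j+s)\), the processing interval is \([S_j+s,S_j+s+p_j)\), and the unloading occupies the server on \([S_j+s+p_j,S_j+2s+p_j)\). The execution window is \(W_j=[S_j,S_j+e_j)\).

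\textbf{Key claim.} If \(j\in\mathcal J_<\), that is \(p_j<s\), then no server operation of another job intersects \(W_j\).

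Server operations are pairwise disjoint, so any foreign operation meeting \(W_j\) must meet the gap \([S_j+s,S_j+s+p_j)\) and avoid both of \(J_j\)'s own operations. It would then have to lie entirely inside this gap. The gap has length \(p_j<s\), while every operation has length \(s\), so this is impossible. This is the only step needing care: operations are non-preemptive intervals of length exactly \(s\), and intervals are half-open.

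\textbf{First bound.} Take two short jobs \(j\neq k\) and suppose their windows intersect. Without loss of generality \(S_j\le S_k<S_j+e_j\). Then the loading interval of \(J_k\) starts inside \(W_j\), hence intersects \(W_j\), contradicting the claim. So the windows \(W_j\) of short jobs are pairwise disjoint subintervals of \([0,C^*)\). Summing their lengths gives \(C^*\ge\sum_{j\in\mathcal J_<}e_j\).

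\textbf{Second bound.} Consider the server timeline on \([0,C^*)\). It is busy for exactly \(2sn\) time units, since there are \(2n\) disjoint operations of length \(s\). For each short job, the gap \([S_j+s,S_j+s+p_j)\) is server-idle: by the claim no foreign operation meets it, and \(J_j\)'s own operations lie outside it. These gaps are contained in the pairwise disjoint windows \(W_j\), so they are pairwise disjoint, and they are disjoint from all busy periods. Hence
\[
C^*\ge 2sn+\sum_{j\in\mathcal J_<}p_j=2sn+\sum_{j\in\mathcal J_<}(e_j-2s).
\]
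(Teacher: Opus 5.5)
Your proof is correct and follows the same route as the paper. Both arguments use the fact that no length-$s$ server operation fits into the gap of length $p_j<s$, conclude that short jobs run serially, and then add the pairwise disjoint forced server-idle gaps to the $2sn$ units of server work. Your explicit claim that no foreign server operation meets the window $W_j$ spells out the paper's somewhat informal non-overlap step in full detail.
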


\begin{proof}
If \(e_j<3s\), then \(p_j=e_j-2s<s\).
Hence no non-preemptive server operation of length \(s\) can fit between the loading
and unloading operations of \(J_j\). Therefore, two jobs in \(\mathcal J_<\) cannot overlap on
the machines: if one such job starts first, the second cannot start during its loading
interval, during its internal gap, or during its unloading interval. Thus, these jobs
must be processed serially, which gives
\[
C^*\ge \sum_{j\in \mathcal J_<}e_j.
\]

Moreover, the server must perform \(2n\) operations of length \(s\). For each
\(J_j\in \mathcal J_<\), the interval between the end of loading and the start of unloading has the 
length
\[
p_j=e_j-2s<s,
\]
so the server is necessarily idle for \(e_j-2s\) time units between these two
operations. These forced idle intervals are disjoint because the corresponding short
jobs are mutually non-overlapping. Therefore, we obtain 
\[
C^*\ge 2sn+\sum_{j\in \mathcal J_<}(e_j-2s).
\]
\end{proof}

\section{Computational complexity}
\label{sec:complexity}

We consider the decision version of problem \(P2,S1\mid s_j=t_j=s\mid C_{\max}\).
For a fixed value of \(s\), the input consists of processing times
\(p_1,\ldots,p_n\) and a deadline \(T\). The question is whether there exists a
feasible schedule with \(C_{\max}\le T\). Throughout this section,
\(s_j=t_j=s\), and \(j=1,\ldots,n\). 

Each job is scheduled as a no-idle block: processing starts immediately after
loading, and unloading starts immediately after processing. Thus, if job \(J_j\)
starts loading at time \(\sigma_j\), its loading, processing, and unloading intervals are
\[
[\sigma_j,\sigma_j+s),\qquad
[\sigma_j+s,\sigma_j+s+p_j),\qquad
[\sigma_j+s+p_j,\sigma_j+2s+p_j).
\]

The result is stated for every fixed integer \(s\ge1\) and therefore, it also covers
the case \(s\ge2\) considered in the performance analysis.

The reduction is from PARTITION, which is NP-complete, see 
\cite{garey_johnson_1979}. We use the equivalent even-sum form: given positive
integers \(a_1,\ldots,a_n\) with \(\sum_{j=1}^n a_j=2B\), decide whether there exists a subset \(I\subseteq\{1,\ldots,n\}\) such that \(\sum_{j\in I}a_j=B\).
The restriction is without loss of generality, since multiplying all numbers in an
arbitrary PARTITION instance by two preserves the answer and gives an even total
sum.

\begin{theorem}
\label{thm:ordinary_np_complete_equal_s}
For every fixed integer \(s\ge1\), the decision version of problem  \(P2,S1\mid s_j=t_j=s\mid C_{\max}\)
is NP-complete.
\end{theorem}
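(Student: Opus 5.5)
The plan is a polynomial reduction from the even-sum form of PARTITION stated above. I choose execution lengths that are multiples of \(4s\). Then the machine loads can only come in multiples of \(4s\), and an appropriate deadline forces both loads to be exactly equal. Membership in NP is routine: a certificate consists of the machine assignment and the integer start times \(\sigma_j\), which may be taken in \([0,T]\). Checking non-overlap on each machine and on the server takes polynomial time.

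\textbf{Construction.} Fix \(s\ge1\) and take a PARTITION instance \(a_1,\ldots,a_n\) with \(\sum_j a_j=2B\). Create \(n\) jobs with processing times
\[
p_j=4s\,a_j-2s\ \ (\ge 2s),
\qquad\text{so that}\qquad
e_j=4s\,a_j .
\]
Set the deadline \(T=4sB+2s\). Since \(s\) is fixed, all numbers are polynomial in the input size.

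\textbf{Only if.} Suppose a schedule with \(C_{\max}\le T\) exists, and let \(L_1,L_2\) be the total execution lengths assigned to the two machines. Each \(L_i\) is a multiple of \(4s\), and \(L_1+L_2=8sB\). If \(L_1\ne L_2\), one of them is at least \(4sB+4s>T\), which contradicts the fact that a machine needs at least its load in time. Hence \(L_1=L_2=4sB\), and the jobs on machine \(1\) give a subset \(I\) with \(\sum_{j\in I}a_j=B\).

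\textbf{If.} Given a partition \(I\), I build the schedule explicitly.
\begin{itemize}
\item Machine \(1\) processes the jobs of \(I\) back-to-back from time \(0\). Its job boundaries lie at multiples of \(4s\), and it finishes at \(4sB\).
\item Machine \(2\) processes the remaining jobs back-to-back from time \(2s\). Its boundaries lie at times \(\equiv 2s \pmod{4s}\), and it finishes at \(4sB+2s=T\).
\end{itemize}
At an internal boundary \(t\) of either machine, the server performs an unload on \([t-s,t)\) followed by a load on \([t,t+s)\), so it is busy on \([t-s,t+s)\). For machine \(1\) these windows are \([4ks-s,4ks+s)\), together with the initial load \([0,s)\) and the final unload \([4sB-s,4sB)\). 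For machine \(2\) they are \([4ks+s,4ks+3s)\), together with its first load \([2s,3s)\) and last unload \([4sB+s,4sB+2s)\). The machine-\(2\) windows lie exactly in the gaps between the machine-\(1\) windows, so all server operations are disjoint. Because each \(p_j\ge 2s\ge s\), no server operation needs to fit inside a processing interval; the only requirement is this boundary alignment.

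The step that needs care is choosing the residues: lengths \(e_j\equiv 0\pmod{4s}\) and the offset \(2s\) for machine \(2\). With these choices the server windows of the two machines interleave without conflict in the ``if'' direction, while the granularity argument in the ``only if'' direction stays valid. Once these are fixed, both directions are short checks.
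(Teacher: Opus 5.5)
Your reduction is correct. It follows the paper's overall strategy: a reduction from PARTITION in which numerical granularity forces an exact split, and a residue argument with machine~2 offset in time to show server feasibility. The gadget, however, is different and somewhat cleaner.

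\textbf{The paper's gadget.} The paper scales processing times, $p_j=Qa_j$ with $Q=2sn+1$. Every execution length is then $\equiv 2s \pmod{Q}$, so each successive job shifts the server residues by $2s$. This is why $Q$ must exceed $2sn$: all $2n$ server operations must occupy the distinct, non-wrapping slots $0,s,\ldots,(2n-1)s$ of one $Q$-period, with machine~2 starting at $2sn_1$. The converse then uses the slack $2sn<Q$ in processing time on the heavier machine.

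\textbf{Your gadget.} You scale execution lengths instead, $e_j=4sa_j$, so every job boundary falls on a fixed residue modulo $4s$ and nothing drifts. With the constant offset $2s$, the two machines' server windows interleave periodically: machine~1 uses residues in $[-s,s)$ and machine~2 uses $[s,3s)$ modulo $4s$. The converse uses the slack $2s<4s$ in machine occupation.

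\textbf{Comparison.} Your version gives a modulus independent of $n$ and $n_1$, and a verification that needs no counting of server operations. The paper's version keeps $p_j$ proportional to $a_j$ and argues the converse from processing times alone.

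\textbf{Minor points.} The constructed numbers are not polynomial in the input size in magnitude; they are $\Theta(s\,a_j)$. What is polynomial is their encoding length, and that is exactly why this reduction, like the paper's, yields ordinary rather than strong NP-completeness. Your certificate with integer start times is justified by the integrality remark in Section~\ref{sec:prelim}.
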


\begin{proof}
The problem belongs to NP. A certificate specifies the machine assignment and
relative order of the job blocks on each machine together with the relative order of
all loading and unloading operations of the server. Once these orders are fixed,
feasibility is described by difference constraints whose right-hand sides are
integral operation lengths. If the system is feasible, its earliest realization can
therefore be chosen integral and has a polynomial encoding length. The corresponding
start times, machine non-overlap constraints, server non-overlap constraints, and
the deadline \(T\) can all be verified in polynomial time.

We prove hardness for a fixed integer \(s\ge1\). Given a PARTITION instance
\(a_1,\ldots,a_n\) with total sum \(2B\), construct one job \(J_j\) for each
integer \(a_j\). Let \(Q=2sn+1\), set \(p_j=Qa_j\) for \(j=1,\ldots,n\), and take \(s_j=t_j=s\) with deadline \(T=QB+2sn\).

Suppose first that the PARTITION instance has a subset \(I\) with \(\sum_{j\in I}a_j=B\).
Assign the jobs in \(I\) to machine 1 and the remaining jobs to machine 2. Let \(n_1=|I|\) and \(n_2=n-n_1\).
The total processing load on each machine is \(QB\).

Choose arbitrary orders for the jobs assigned to each machine. Schedule the jobs on
machine 1 consecutively as no-idle blocks, starting at time \(0\). Schedule the jobs on machine 2 consecutively as no-idle blocks, starting at time \(2sn_1\).
We show that the resulting schedule is feasible.

Consider the start times of the server operations modulo \(Q\). On machine 1, the
loading and unloading operations of its \(k\)-th job, \(k=1,\ldots,n_1\), start at
residues
\[
2s(k-1)
\quad\text{and}\quad
2s(k-1)+s
\pmod Q,
\]
respectively, because all processing times are multiples of \(Q\). Hence machine 1
uses the server-operation residues
\[
0,s,2s,\ldots,(2n_1-1)s.
\]
Similarly, because machine 2 starts at time \(2sn_1\), its server operations use the
residues
\[
2sn_1,(2n_1+1)s,\ldots,(2n-1)s.
\]
Thus, all server operations in the constructed schedule have the residues
\[
0,s,2s,\ldots,(2n-1)s
\pmod Q.
\]
Each server operation has the length \(s\), and
\[
(2n-1)s+s=2sn<Q=2sn+1.
\]
Therefore, the length-\(s\) server intervals with these residues are pairwise
disjoint within one \(Q\)-period. Moreover, no interval wraps around modulo \(Q\),
and the gap from the end of the last interval in one \(Q\)-period to the first
interval in the next \(Q\)-period is
\[
Q-2sn=1.
\]
Thus, residue disjointness implies non-overlap of all server operations on the real
time line. The machine constraints are satisfied by construction because the jobs are
scheduled consecutively on each machine as no-idle blocks.

Machine 1 completes at \(QB+2sn_1\le T\), whereas machine 2 completes at \(2sn_1+QB+2sn_2=T\).
Hence the constructed schedule is feasible and satisfies \(C_{\max}\le T\).

Conversely, suppose that the PARTITION instance is a no-instance. In any feasible
schedule of the constructed scheduling instance, the jobs assigned to one machine
define a subset of \(\{1,\ldots,n\}\), and the jobs assigned to the other machine
define its complement. Since the total weight is \(2B\) and no subset has weight
exactly \(B\), one machine receives jobs with total \(a\)-weight at least \(B+1\).
The total processing time assigned to that machine is therefore at least \(Q(B+1)\). Processing intervals on the same machine cannot overlap, so every feasible schedule satisfies \(C_{\max}\ge Q(B+1)\). Since \(Q(B+1)=QB+2sn+1>T\),
the deadline cannot be met.

The PARTITION instance is therefore a yes-instance if and only if the constructed
scheduling instance has a feasible schedule with a makespan at most \(T\). The
transformation is polynomial. Together with the membership in NP, this proves NP-completeness for every fixed
integer \(s\ge1\). The PARTITION reduction establishes NP-completeness but, by itself,
does not establish strong NP-completeness for fixed \(s\).
\end{proof}
\vspace{1em}

The preceding result treats the common loading--unloading time as fixed. We next
consider the more general case in which this common time is part of the input. This
case is structurally different, because the numerical value of \(s\) can be used in the
reduction to separate the roles of short jobs and long blocking jobs. The following
theorem shows that, under this interpretation, the problem is strongly NP-complete.
\begin{theorem}
\label{thm:variable_s_strong_np_complete}
The decision version of problem  \(P2,S1\mid s_j=t_j=s\mid C_{\max}\)
is strongly NP-complete when the common loading--unloading time \(s\) is part of the input.
\end{theorem}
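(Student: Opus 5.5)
We would prove membership in NP exactly as in Theorem~\ref{thm:ordinary_np_complete_equal_s}: fixed machine and server orders give difference constraints with integral right-hand sides. For strong hardness, we would reduce from 3-PARTITION, which is strongly NP-complete. An instance consists of $3m$ positive integers $a_1,\ldots,a_{3m}$ with $\sum_i a_i=mB$ and $B/4<a_i<B/2$; the question is whether the indices split into $m$ triples of sum $B$. Since $s$ is now part of the input, we may take $s$ polynomially large in $mB$, say $s=mB+1$. All numbers in the constructed instance then stay polynomial in the unary size, so strong hardness transfers.

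\textbf{Construction.} We use two kinds of jobs.
\begin{itemize}
\item There are $3m$ \emph{short} jobs with $p_i=a_i<s$, so they lie in $\mathcal J_<$ of Lemma~\ref{lem:short_job_lower_bounds}.
\item There are $m$ \emph{frame} jobs with $p_F=6s+B$.
\end{itemize}
The deadline is
\[
T=m(8s+B)=2s(4m)+\sum_i a_i .
\]
By Lemma~\ref{lem:short_job_lower_bounds}, this equals the lower bound $2sn+\sum_{j\in\mathcal J_<}(e_j-2s)$, where $n=4m$.

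\textbf{Yes-direction.} Run the frames consecutively on machine 1, starting at time $0$. Inside the processing interval of the $k$-th frame, which has length $6s+B$ and starts right after its loading, run on machine 2 the three short jobs of the $k$-th triple back to back. Their total length is $6s+B$, so they exactly fill the frame's server-free window. Between consecutive frames, the server performs the unloading of one frame and the loading of the next while machine 2 waits. Checking that the server intervals are disjoint is routine.

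\textbf{No-direction.} Assume some schedule meets $T$. Then the bound of Lemma~\ref{lem:short_job_lower_bounds} is tight, so the server is busy at every moment except the forced internal gaps of the short jobs. The main argument would proceed in three steps.
\begin{enumerate}
\item Show that no two frame jobs overlap in time. If they did, one frame's processing window would contain the other's loading or unloading, and a counting argument on server time (using $s>mB$) would force extra server idle time. Hence the frames occupy disjoint intervals whose total length is already $T$. So they run contiguously, effectively covering $[0,T)$, and every short job lies inside some frame's processing window.
\item Inside a window of length $6s+B$, the server must be busy except for short-job gaps. Short jobs are serial, and each uses at least $2s$ of server time. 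Since $s>mB\ge B$, a window holds at most three short jobs, because four would need $8s>6s+B$. It also holds at least three, because otherwise the server would idle more than $s>B\ge\sum a$. So it holds exactly three.
\item The three short jobs in a window have total length $6s+\sum a\le 6s+B$. Summing over all $m$ windows forces equality in each window. This yields a 3-partition.
\end{enumerate}

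\textbf{Main obstacle.} The hard step is the first one: excluding frames that run in parallel on both machines, and excluding short jobs placed outside frame windows, for instance at the very beginning or end of the schedule. We would handle it by an exact accounting of server idle time. In a schedule of makespan $T$, the server's total idle time must equal $\sum a_i$ exactly. Any frame boundary not covered by the pattern ``unload, then load'' on one machine, or any short job straddling a frame boundary, creates an idle stretch on the server that cannot be attributed to a short-job gap. Choosing $s>mB$ makes every such stretch exceed the whole available slack, which gives a contradiction. If this accounting proves delicate at the schedule's ends, we would first try adding padding, namely one extra frame or adjusting $T$ by $2s$, so that the ends become rigid.
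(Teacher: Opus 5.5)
Your reduction is correct, and it takes a genuinely different route from the paper. The paper reduces from Numerical Matching with Target Sums with $s=K^2$ and sets the threshold equal to the machine-load bound $(E+2s)/2$. The converse direction then needs several pieces: two special jobs $A,D$, a containment analysis showing that each $Z$-job holds exactly one $X$-job and one $Y$-job, and a telescoping argument on the slacks $d_q$ to recover the matching. You instead reduce from 3-PARTITION with $s=mB+1$ and set $T$ equal to the short-job server bound of Lemma~\ref{lem:short_job_lower_bounds}. Tightness of that bound makes the server busy at every instant outside the short-job gaps, and this rigidity makes your converse much shorter and more transparent.

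The step you flag as the main obstacle closes directly, with no padding needed.
\begin{itemize}
\item Suppose frames $F$ and $F'$ overlap and $F$ is loaded first, at time $\sigma$. Between the end of the loading of $F'$ and the start of the unloading of $F$, both machines are processing frames. So no short job can run, and the server is idle outside every short-job gap.
\item Hence $F'$ must be loaded at exactly $\sigma+6s+B$.
\item Now let $F$ be the earliest frame in a chain of overlapping frames. A job on the other machine that straddles time $\sigma+s$ would have to contain the loading of $F$ inside its processing interval. A short job cannot do this, and a frame cannot by the choice of $F$. So the stretch $[\sigma+s,\sigma+6s+B)$ is served only by short jobs lying entirely inside it.
\item Filling it would require $5s+B=2sk+\sum_{i\in S}a_i$ with $|S|=k$. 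This is impossible, because $0<B<s$ and $0\le\sum_{i\in S}a_i\le mB<s$.
\end{itemize}
Therefore the frames are pairwise time-disjoint. Their total length is $m(8s+B)=T$, so they tile $[0,T)$, and your worries about the ends of the schedule vanish.

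Your window equation $6s+B=2sk+\sum a$ then forces $k=3$ and $\sum a=B$ in each window directly. The summing step is unnecessary, and so are the side conditions $B/4<a_i<B/2$. Like the paper's reduction, your numbers stay polynomial in the unary size of the source instance, so strong NP-completeness transfers.
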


\begin{proof}
Membership in NP follows from the same fixed-order difference-constraint argument used
in Theorem~\ref{thm:ordinary_np_complete_equal_s}; allowing \(s\) to be part of the
input changes only the binary encoding of the integral right-hand sides. The reduction is made from
Numerical Matching with Target Sums (NMTS), which is strongly
NP-complete, see \cite{garey_johnson_1979}. The same source problem is also used by
\cite{hall2000} in their common-server makespan reduction for the setup-only
case with a variable common setup time.

An instance of NMTS consists of three indexed collections of positive integers
\[
X=(x_1,\ldots,x_r),\qquad
Y=(y_1,\ldots,y_r),\qquad
Z=(z_1,\ldots,z_r),
\]
satisfying \(\sum_{k=1}^r z_k=\sum_{i=1}^r x_i+\sum_{j=1}^r y_j\). The question is whether there exist permutations \(\pi\) and \(\tau\) of \(\{1,\ldots,r\}\) such that \(x_{\pi(k)}+y_{\tau(k)}=z_k\) for \(k=1,\ldots,r\).

Choose an integer \(K>\max\{6,\sum_{i=1}^r x_i+\sum_{j=1}^r y_j+\sum_{k=1}^r z_k\}\) and set \(s=K^2\).
Thus, every \(x_i,y_j,z_k\) is smaller than \(K\), and \(s>K\).

We construct an instance of problem \(P2,S1\mid s_j=t_j=s\mid C_{\max}\). All jobs have the 
common loading time \(s\) and the common unloading time \(s\). There are two special jobs \(A\) and \(D\), and three indexed job families  with
\[
p_A=p_D=s,\qquad
p(X_i)=2K+x_i,\qquad
p(Y_j)=3K+y_j,\qquad
p(Z_k)=6s+5K+z_k.
\]
The constructed instance has \(3r+2\) jobs, and we set the makespan threshold to \(T=5rK+6rs+\sum_{k=1}^r z_k+4s\).

Let \(E=\sum_j(p_j+2s)\) denote the total machine occupation time.
The sum of the processing times is
\[
\begin{aligned}
\sum_j p_j
&=
\sum_{i=1}^r(2K+x_i)
+\sum_{j=1}^r(3K+y_j)
+\sum_{k=1}^r(6s+5K+z_k)
+2s  \\
&=
10rK+6rs+2\sum_{k=1}^r z_k+2s,
\end{aligned}
\]
where the defining equality of the NMTS instance has been used. Hence
\[
E=10rK+12rs+2\sum_{k=1}^r z_k+6s.
\]
Therefore, we obtain 
\[
\frac{E+2s}{2}
=
5rK+6rs+\sum_{k=1}^r z_k+4s
=
T.
\]
The value \(T\) is exactly the two-machine load lower bound. Indeed, during the first
loading operation one machine is necessarily idle for \(s\) time units, and the same
holds during the final unloading operation. Thus, every feasible makespan \(C\) satisfies
\(2C-E\ge2s=2T-E\), and hence \(C\ge T\). Consequently, any schedule with
\(C\le T\) must satisfy \(C=T\) and \(2C-E=2s\); it has no aggregate machine
idle time beyond these two unavoidable intervals.

Assume first that the NMTS instance is feasible. Let \(\pi,\tau\) be permutations such that \(x_{\pi(k)}+y_{\tau(k)}=z_k\) for \(k=1,\ldots,r\).
We construct a schedule of makespan \(T\). Start with job \(A\). After \(A\) has been
loaded, its unloading is due after \(s\) time units. During this interval, load \(Z_1\);
then unload \(A\). The remaining time before \(Z_1\) must be unloaded is \(p(Z_1)-s=5s+5K+z_1\).
Now complete \(X_{\pi(1)}\) and \(Y_{\tau(1)}\) on the other machine. These two jobs consume \(p(X_{\pi(1)})+2s=2s+2K+x_{\pi(1)}\) and \(p(Y_{\tau(1)})+2s=2s+3K+y_{\tau(1)}\) time units. The remaining time before \(Z_1\) must be unloaded is therefore
\[
\begin{aligned}
&5s+5K+z_1
-(2s+2K+x_{\pi(1)})
-(2s+3K+y_{\tau(1)}) \\
&\qquad =
s+z_1-x_{\pi(1)}-y_{\tau(1)}
=s.
\end{aligned}
\]
The same calculation applies at every stage: after the matched \(X\)- and \(Y\)-jobs
associated with \(Z_k\) have been completed, exactly \(s\) time units remain before
\(Z_k\) must be unloaded. If \(k<r\), load \(Z_{k+1}\) in this interval; for
\(k=r\), load \(D\). Unloading \(Z_k\) then continues the same construction. After
\(Z_r\) has been unloaded, job \(D\) is due for unloading and is unloaded immediately.
The schedule has no machine idle time other than the unavoidable \(2s\),
and hence its makespan is \(T\).

Conversely, suppose that the constructed scheduling instance has a feasible schedule
with a makespan at most \(T\). Then the schedule has no machine idle time other than
the unavoidable idle time during the first loading operation and the final unloading
operation.

We say that a job \(J_h\) contains a job \(J_j\) if, after \(J_h\) has been loaded and
before \(J_h\) is unloaded, job \(J_j\) is loaded, processed, and unloaded on the other
machine.

Consider a time at which one machine is free and the other machine is processing a job
whose unloading is due after \(R\) time units. If a new job \(J_j\) is loaded on the
free machine before that unloading takes place, then, since no additional machine idle
time is possible, exactly one of the following two cases occurs. Either job \(J_j\) is loaded,
processed, and unloaded before the old job is unloaded, which requires \(p_j+2s\le R\), or \(J_j\) is loaded first, the old job is then unloaded, and job \(J_j\) remains on its machine, which requires \(s\le R\le p_j\).
If \(R<s\) while jobs remain unscheduled, then no further loading operation can be
completed before the pending unloading. This would create additional machine idle
time, contradicting \(2T-E=2s\).

First, every \(X\)-job and every \(Y\)-job must be contained in another job. Indeed, \(p(X_i)=2K+x_i<s\) and \(p(Y_j)=3K+y_j<s\).
Hence neither an \(X\)-job nor a \(Y\)-job can satisfy \(s\le R\le p_j\), so such a
job cannot carry the schedule across the unloading of a previously active job. It
also cannot initiate an intermediate reset. Any loading that starts from a reset
after the schedule has begun contributes another interval of length \(s\) during
which the other machine is idle, contradicting the equality \(2T-E=2s\). The same
argument rules out an \(X\)- or \(Y\)-job as the first job: since its processing
time is smaller than \(s\), no second loading can be completed before its unloading,
and any continuation would require a new reset and hence additional aggregate idle
time. Therefore, every \(X\)- and every \(Y\)-job is contained in another job.
Moreover, such a container must be a \(Z\)-job. A special job provides a containment
window of at most \(s\), while an \(X\)- or \(Y\)-job provides a window shorter than
\(s\); either is too short to contain an \(X/Y\)-job, whose execution length exceeds
\(2s\).

Second, no \(Z\)-job can be contained in another job. A contained \(Z_k\) requires a continuous containment window of length \(e(Z_k)=p(Z_k)+2s=8s+5K+z_k>8s\).
No possible container provides such a window. If a \(Z_h\)-job is the first job of
a component, the interval available on the other machine between the end of its
loading and the start of its unloading has the length \(p(Z_h)=6s+5K+z_h<7s\).
If a \(Z_h\)-job is itself loaded before a previously active job is unloaded, then,
after that pending unloading, the remaining interval before \(Z_h\) must be unloaded is at most \(p(Z_h)-s=5s+5K+z_h<6s\).
Jobs of type \(X\), \(Y\), and the two special jobs provide still shorter windows.
Thus, no job can contain a \(Z\)-job. Consequently, each \(Z\)-job is either the first
job in the schedule or is loaded before the unloading of a previously active job.

We next prove that the first job must be one of the two special jobs. The first job
cannot be an \(X\)-job or a \(Y\)-job, because such a job has a processing time smaller
than \(s\), so no second loading operation can be completed before its unloading is due.
Suppose, for contradiction, that the first job is a \(Z\)-job. A \(Z\)-job can contain at
most two \(X/Y\)-jobs. For a first \(Z_k\)-job, the interval before its unloading has the length
\[
p(Z_k)=6s+5K+z_k<6s+6K,
\]
whereas any three \(X/Y\)-jobs require more than
\[
3(2s+2K)=6s+6K
\]
time units. A later \(Z\)-job provides an even shorter interval, of length less than
\(6s\) and therefore, it also cannot contain three \(X/Y\)-jobs.

If the first job were a \(Z\)-job, neither special job would be first. A special job that
is not contained in another job can only be used when the pending unloading is due
exactly \(s\) time units later; it then closes the active component. At most one special
job can close the final component of the schedule. Hence at least one special job would
have to be contained in a \(Z\)-job. Such a special job consumes
\[
s+2s=3s
\]
time units. Together with two \(X/Y\)-jobs, it would require more than
\[
3s+2(2s+2K)=7s+4K
\]
time units. This exceeds the largest possible interval of a \(Z\)-job, since
\[
7s+4K-(6s+5K+z_k)=s-K-z_k>0.
\]
Hence a \(Z\)-job containing a special job can contain at most one \(X/Y\)-job. The total
capacity for \(X/Y\)-jobs would then be at most
\[
1+2(r-1)=2r-1,
\]
but there are \(2r\) such jobs. This contradiction shows that the first job must be
special. Denote this first special job by \(A\).

The other special job, denoted by \(D\), must be the final job loaded. If \(D\) were
used before all other jobs are scheduled and were not contained in another job, it would
close the active component. Any continuation would then create an additional
first-loading idle interval of length \(s\), contradicting \(2T-E=2s\). If \(D\) were
contained in a \(Z\)-job, then, as above, that \(Z\)-job could contain at most one
\(X/Y\)-job, leaving a total capacity at most \(2r-1\) for the \(2r\) \(X/Y\)-jobs. Therefore, 
\(D\) cannot appear before the end.

After \(A\) is loaded, the remaining time before its unloading is exactly \(s\). At this
point no \(X\)-job or \(Y\)-job can be scheduled, and \(D\) is unavailable until the end.
Therefore, the next job must be a \(Z\)-job. More generally, after each \(Z\)-job has
accommodated the jobs contained in its interval, the next job loaded before that
\(Z\)-job is unloaded must be either another \(Z\)-job or, for the last such interval, the
final special job \(D\).

Each \(Z\)-job contains at most two \(X/Y\)-jobs. Indeed, after a \(Z_k\)-job is loaded
before another job is unloaded, the remaining interval before \(Z_k\) must be unloaded
has a length at most
\[
5s+5K+z_k<6s,
\]
whereas any three \(X/Y\)-jobs require more than \(6s\) time units. Since there are
\(r\) \(Z\)-jobs and \(2r\) \(X/Y\)-jobs, every \(Z\)-job contains exactly two \(X/Y\)-jobs.

We now show that every \(Z\)-job contains one \(X\)-job and one \(Y\)-job. Consider a
\(Z_k\)-job that is loaded when the pending unloading is due after \(s+d\) time units,
where \(d\ge0\). After loading \(Z_k\) and unloading the previously active job, the
remaining interval before \(Z_k\) must be unloaded is
\[
5s+5K+z_k-d.
\]
If two \(Y\)-jobs were contained in this interval, the remaining time would be
\[
\begin{aligned}
&5s+5K+z_k-d
-(2s+3K+y_1)
-(2s+3K+y_2) \\
&\qquad =
s-K+z_k-d-y_1-y_2
<s,
\end{aligned}
\]
because \(K>z_k+y_1+y_2\). Such a remaining interval cannot support another loading
operation before \(Z_k\) must be unloaded. Any continuation through a new reset would
introduce additional aggregate machine idle time, contradicting \(2T-E=2s\). Hence no
\(Z\)-job contains two \(Y\)-jobs.
Since the \(r\) \(Z\)-jobs contain exactly \(2r\) \(X/Y\)-jobs in total, and since there
are exactly \(r\) \(X\)-jobs and \(r\) \(Y\)-jobs, every \(Z\)-job contains exactly one
\(X\)-job and one \(Y\)-job.

It remains to recover a feasible NMTS matching. Let the \(q\)-th \(Z\)-job in the
schedule correspond to the target value \(z_{\ell(q)}\), where \(\ell\) is a permutation of
\(\{1,\ldots,r\}\). Let the remaining time before the pending unloading when this
\(Z\)-job is loaded be
\[
s+d_{q-1},
\qquad d_{q-1}\ge0.
\]
For the first \(Z\)-job, \(d_0=0\), since it follows \(A\). After loading this \(Z\)-job
and unloading the previously active job, the remaining interval before the \(Z\)-job
must be unloaded is
\[
5s+5K+z_{\ell(q)}-d_{q-1}.
\]
Let \(X_{i(q)}\) and \(Y_{j(q)}\) be the two jobs contained in this interval. The
remaining time before the \(Z\)-job must be unloaded is then
\[
\begin{aligned}
&5s+5K+z_{\ell(q)}-d_{q-1}
-(2s+2K+x_{i(q)})
-(2s+3K+y_{j(q)}) \\
&\qquad =
s+z_{\ell(q)}-d_{q-1}-x_{i(q)}-y_{j(q)}.
\end{aligned}
\]
Write this value as \(s+d_q\). Since the schedule either continues with another loading
operation or ends with the final special job, we have \(d_q\ge0\). Therefore, we have 
\[
z_{\ell(q)}-\bigl(x_{i(q)}+y_{j(q)}\bigr)=d_{q-1}+d_q.
\]
For the last \(Z\)-job, the final special job \(D\) can close the schedule only when the
remaining time is exactly \(s\). Hence \(d_r=0\).

Summing up over \(q=1,\ldots,r\), we obtain
\[
\sum_{q=1}^r z_{\ell(q)}
-\sum_{q=1}^r x_{i(q)}
-\sum_{q=1}^r y_{j(q)}
=
d_0+d_r+2\sum_{q=1}^{r-1}d_q.
\]
The left-hand side is zero because all \(Z\)-jobs, all \(X\)-jobs, and all \(Y\)-jobs are
used exactly once, and because the NMTS instance satisfies
\[
\sum_{k=1}^r z_k=\sum_{i=1}^r x_i+\sum_{j=1}^r y_j.
\]
Also \(d_0=d_r=0\). Since all \(d_q\) are non-negative, it follows that
\[
d_1=\cdots=d_{r-1}=0.
\]
Consequently,
\[
z_{\ell(q)}=x_{i(q)}+y_{j(q)},\qquad q=1,\ldots,r.
\]
Therefore, the schedule defines a feasible solution to the NMTS instance.

The construction is polynomial. Under the standard polynomially bounded restriction
for NMTS used in strong NP-completeness, the constructed values are polynomially
bounded as well, since \(s=K^2\) and all processing times are polynomial in \(K\).
Therefore, the decision version of problem  \(P2,S1\mid s_j=t_j=s\mid C_{\max}\)
is strongly NP-complete when \(s\) is part of the input.
\end{proof}

The theorem concerns the case in which \(s\) is part of the input. It does not
settle whether the fixed-\(s\) problem is strongly NP-hard for some fixed value of
\(s\) or admits a pseudo-polynomial algorithm.


\section{The performance of list scheduling}
\label{sec:ls_equal_s}

We first analyse the standard list-scheduling rule, denoted by LS. The jobs are
processed in the given input order. Whenever the next job is considered, it is
assigned to the machine with the smaller current completion time (with ties resolved
by fixed machine index), appended to that machine, and started at the earliest time
not earlier than the machine's current completion at which both its loading and
immediate unloading operations are server-feasible. Thus, Algorithm  LS never inserts a job into
an internal machine gap.

\begin{theorem}
\label{thm:ls_ratio_equal_s}
For problem \(P2,S1\mid s_j=t_j=s\mid C_{\max}\), with integer \(s\ge2\), the tight
worst-case bound for Algorithm LS is \(\rho_{\mathrm{LS}}(s)=2\),
where the ratio is understood in the supremum sense.
\end{theorem}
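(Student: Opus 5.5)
The plan is to prove an upper bound \(C_{\mathrm{LS}}<2C^*\) from a covering invariant combined with Lemma~\ref{lem:basic_lb_s}, and then to give an instance family whose ratio tends to \(2\).

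\emph{Upper bound.} I would show by induction over the jobs in the order LS considers them that, after each assignment, the union of the machine-occupation intervals covers the whole interval \([0,M)\), where \(M\) is the current maximum completion time. In other words, before the makespan there is never a moment at which both machines are idle.

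For the induction step, suppose job \(J_j\) is assigned to the machine with completion time \(c\), and the other machine has completion time \(c'\ge c\). Every previously scheduled server operation ends by \(c'\). Starting \(J_j\) at time \(c'\) is therefore server-feasible: its loading and its unloading both lie at or after \(c'\), where the server is free, and \(c'\ge c\). Hence the LS start time satisfies \(c\le\sigma_j\le c'\).

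By the induction hypothesis \([0,c')\) is already covered. The new job covers \([\sigma_j,\sigma_j+e_j)\) with \(\sigma_j\le c'\). So \([0,\max\{c',\sigma_j+e_j\})\) is covered, and this is the new value of \(M\). Any idle time created on the chosen machine inside \([c,\sigma_j)\) is therefore harmless, because it lies inside the span already covered.

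At the end, \(C_{\mathrm{LS}}\) is at most the total occupation, so \(C_{\mathrm{LS}}\le E\). By Lemma~\ref{lem:basic_lb_s}, \(E\le 2C^*-2s<2C^*\), which gives \(\rho_{\mathrm{LS}}(s)\le2\). The only delicate point is the claim \(\sigma_j\le c'\), and it holds because LS only appends jobs and never inserts them into gaps, so no server operation of an earlier job lies after \(c'\).

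\emph{Lower bound.} Since \(s\ge2\), I can use short jobs. Take \(n\) jobs with \(p_j=1\), so \(e_j=2s+1<3s\), listed first, followed by one long job with \(p=P:=n(2s+1)\). By the argument in the proof of Lemma~\ref{lem:short_job_lower_bounds}, no two short jobs can overlap. LS therefore executes them serially, ending at time at least \(n(2s+1)\). The machine of the earlier short jobs completes at time at least \((n-1)(2s+1)\), so the long job starts no earlier than that and
\[
C_{\mathrm{LS}}\ge (n-1)(2s+1)+P+2s\approx 2n(2s+1).
\]
In an optimal-type schedule, load the long job first on machine 1 during \([0,s)\). Then run all short jobs back to back on machine 2 starting at time \(s\). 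They occupy \([s,s+n(2s+1))\), which lies inside the long job's processing window \([s,s+P)\), so their server operations do not conflict with the long job's loading or unloading. Hence \(C^*\le P+2s=n(2s+1)+2s\).

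The ratio is at least \(\bigl(2n(2s+1)-1\bigr)/\bigl(n(2s+1)+2s\bigr)\), which tends to \(2\) as \(n\to\infty\). Together with the strict upper bound, \(2\) is the supremum and is not attained. I expect the main obstacle to be stating the covering invariant carefully, in particular checking that delayed starts on the less-loaded machine never open a common idle interval before \(C_{\mathrm{LS}}\); the lower-bound family is routine.
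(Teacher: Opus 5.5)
Your proof is correct and follows essentially the same route as the paper. The upper bound rests on the same observation that LS never starts a job later than the current makespan, so $C_{\mathrm{LS}}\le E\le 2C^*-2s$; your covering invariant is a repackaging of the paper's induction that the makespan grows by at most $e_j$ per step. Your lower-bound family is exactly the paper's instance of $n$ jobs with $e_j=2s+1$ followed by one job of length $n(2s+1)+2s$, with the same optimal schedule.

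The only deviation is that you use the weaker estimate $C_{\mathrm{LS}}\ge(2n-1)(2s+1)+2s$ rather than the exact value $2n(2s+1)+2s$, which still makes the ratio tend to $2$. That estimate tacitly uses that LS alternates the serial short jobs between the two machines; this is immediate, because each new short job must start after its predecessor ends and therefore lands on the currently earlier machine.
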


\begin{proof}
We first prove the upper bound. We claim that
\[
C_{\mathrm{LS}}\le E.
\]
Consider any step of the LS construction, and let \(C\) be the current makespan before
the next job is inserted. At time \(C\), all previously scheduled machine operations
and all previously scheduled server operations have finished. Therefore, starting the
next job at time \(C\) is always feasible. Since LS starts the job at the earliest feasible
time on the selected machine, the next job starts no later than \(C\), and the makespan
increases by at most its execution time. By induction over the list, we have
\[
C_{\mathrm{LS}}\le \sum_{j=1}^n e_j=E.
\]

By Lemma~\ref{lem:basic_lb_s}, we obtain 
\[
C^*\ge \frac{E+2s}{2}.
\]
Therefore,
\[
\frac{C_{\mathrm{LS}}}{C^*}
\le
\frac{E}{(E+2s)/2}
=
\frac{2E}{E+2s}
<2.
\]
Hence the worst-case ratio of Algorithm LS is at most \(2\).

It remains to show that this bound is tight in the supremum sense. Fix an integer
\(m\ge1\). Consider an instance with \(m+1\) jobs. The first \(m\) jobs in the input list have the execution time \(a=2s+1\), and the last job has the execution time \(B=ma+2s=m(2s+1)+2s\).
These execution times are feasible because the corresponding processing times are
positive integers.

Since \(a=2s+1<3s\), each of the first \(m\) jobs has an internal server-free gap of length \(a-2s=1<s\).
Thus, no loading operation of another job can be inserted between the loading and
unloading of such a job. Equivalently, these short jobs cannot overlap under the LS
construction. Hence the first \(m\) jobs are scheduled serially and finish at time \(ma\).
Moreover, throughout this serial prefix every server-free interval strictly between
a short job's loading and unloading has the length only \(1<s\). Hence the final job
cannot complete its loading operation before the \(m\)-th short job has been
unloaded at time \(ma\). The last job is therefore scheduled no earlier than time \(ma\), and \(C_{\mathrm{LS}}=ma+B=2ma+2s\).

We now construct a feasible schedule of makespan \(B\). Schedule the long job on one machine in \([0,B)\).
Its loading interval is \([0,s)\), and its unloading interval is \([B-s,B)\). Schedule the
\(m\) short jobs serially on the other machine, starting at time \(s\). Thus, the \(i\)-th short job starts at \(s+(i-1)a\), for \(i=1,\ldots,m\).
The last short job completes at \(s+ma=B-s\),
exactly when the unloading interval of the long job begins. The server intervals of
the short jobs are pairwise disjoint, with only the forced unit gaps inside the short
jobs, and they do not overlap the loading or unloading interval of the long job. Hence
the schedule is feasible and has the makespan \(B\).

Since the long job itself has the execution time \(B\), every feasible schedule has makespan at least \(B\). Therefore, \(C^*=B\) holds.
Consequently,
\[
\frac{C_{\mathrm{LS}}}{C^*}
=
\frac{ma+B}{B}
=
\frac{2ma+2s}{ma+2s}
=
2-\frac{2s}{ma+2s}.
\]
Letting \(m\to\infty\), this ratio tends to \(2\). Hence the tight worst-case bound,
in the supremum sense, is
\[
\rho_{\mathrm{LS}}(s)=2.
\]
\end{proof}

\section{The longest-processing-time rule}
\label{sec:lpt_equal_s}

We next analyse the longest-processing-time rule. Let \(e_j=p_j+2s\) denote the execution length of job \(J_j\), including its loading, processing,
and unloading operations. Throughout this section, we retain the integral-time
assumptions of the model, namely \(s\in\mathbb Z_{+}\) and
\(p_j\in\mathbb Z_{+}\) for every job. Since the loading and unloading times
are identical for all jobs, ordering the jobs by non-increasing processing times
is equivalent to ordering them by nonincreasing execution lengths.

LPT indexes the jobs so that \(e_1\ge e_2\ge\cdots\ge e_n\).
Ties in this order are resolved according to an arbitrary but fixed rule. When
job \(J_j\) is considered, it is appended to the machine with the smaller
current completion time; a tie is resolved in favour of the machine with the
smaller fixed index. The loading operation is started at the earliest time not earlier
than that machine's current completion for which both the loading interval and the
immediate unloading interval are feasible on the common server. The rule does not
insert a job into an internal
machine gap.

The analysis depends on whether the processing interval of a job is long enough
to accommodate a complete server operation. We call \(J_j\) \emph{long} if \(e_j\ge3s\), equivalently \(p_j\ge s\), and \emph{short} otherwise. We first analyse the block structure generated by Algorithm 
LPT, then establish the exact ratio for long-job instances, and finally treat
unrestricted instances.

\subsection{Block structure and frontier states}
\label{subsec:lpt-block-structure}

A \emph{reset state} is a time at which both machines and the server are free. A
\emph{block} begins when a job is loaded from a reset state. It ends in one of
the following ways.

\begin{enumerate}
\item Its final two unloading operations become adjacent. Such a block is called
      a \emph{closure block}.
\item The next job cannot be inserted before the later machine completion and
      therefore, it starts from a new reset state. The preceding block is called a
      \emph{waiting block}.
\item The schedule terminates without either event. The final block is then
      called a \emph{terminal block}.
\end{enumerate}

Within a nonclosed block, the relevant frontier can be represented by a clean
state
\[
\mathcal C(c_1,c_2),\qquad c_1\le c_2,
\]
where \(c_1\) and \(c_2\) are the current machine-completion times and the only server operation extending beyond \(c_1\) is the unloading interval \([c_2-s,c_2)\) of the job on the later-completing machine. We write \(\Delta=c_2-c_1\).

\begin{lemma}[Clean-state insertion]
\label{lem:lpt-clean-insertion}
Suppose that a job of execution length \(e\) is inserted from the clean state
\(\mathcal C(c_1,c_2)\).

\begin{enumerate}
\item If \(\Delta<2s\), the job cannot be loaded before the pending unloading
      and starts at \(c_2\).
\item If \(\Delta\ge2s\) and
      \[
      e\le\Delta-s
      \quad\text{or}\quad
      e\ge\Delta+s,
      \]
      the job starts at \(c_1\).
\item If
      \[
      \Delta-s<e<\Delta+s,
      \]
      a long job bridges around the pending unloading: it starts at
      \[
      c_2-e+s
      \]
      and unloads immediately after the pending unloading. A short job cannot
      form such a bridge and starts at \(c_2\).
\end{enumerate}
\end{lemma}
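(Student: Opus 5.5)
The proof reduces to a case analysis on how the new job's two server operations can sit relative to the single pending server operation. By the rule, the job goes to the machine completing at \(c_1\) (if \(\Delta=0\) either choice is symmetric), and its start time is some \(t\ge c_1\). The job then occupies the server on \([t,t+s)\) and \([t+e-s,t+e)\). By the definition of the clean state \(\mathcal C(c_1,c_2)\), the only server operation that extends beyond \(c_1\) is the unloading \([c_2-s,c_2)\). Since all intervals are half-open, feasibility of \(t\) means exactly that neither interval of the job overlaps \([c_2-s,c_2)\). Each of the two job intervals lies either entirely before or entirely after this pending interval. Because loading precedes unloading, only three placements remain:

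\begin{enumerate}
\item \emph{Both before:} \(t+e\le c_2-s\).
\item \emph{Bridge:} loading before and unloading after, i.e.\ \(t+s\le c_2-s\) and \(t+e-s\ge c_2\).
\item \emph{Both after:} \(t\ge c_2\).
\end{enumerate}

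I would write each placement as an interval of admissible \(t\) and intersect it with \(t\ge c_1\).

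\begin{enumerate}
\item Placement (a) is feasible iff \(c_1\le c_2-s-e\), i.e.\ \(e\le\Delta-s\). Its earliest start is \(t=c_1\).
\item Placement (b) is feasible iff \(\max\{c_1,\,c_2-e+s\}\le c_2-2s\). This requires \(\Delta\ge2s\) and \(e\ge3s\), so the job must be long. The earliest start is \(t=\max\{c_1,c_2-e+s\}\).
\item Placement (c) is always feasible, with earliest start \(t=c_2\).
\end{enumerate}

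The algorithm takes the minimum of the feasible earliest starts. Since \(c_1\le\max\{c_1,c_2-e+s\}\le c_2\), placement (a) wins whenever it is available, then (b), then (c).

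Now I would read off the three cases of the lemma.

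\begin{enumerate}
\item If \(\Delta<2s\), then (b) fails. Placement (a) would need \(e\le\Delta-s<s\), which contradicts \(e\ge2s+1\). Hence \(t=c_2\).
\item If \(\Delta\ge2s\) and \(e\le\Delta-s\), then (a) applies and \(t=c_1\). If \(\Delta\ge2s\) and \(e\ge\Delta+s\), then \(e\ge3s\), so the job is long, and \(c_2-e+s\le c_1\). Thus (b) is available with \(t=c_1\).
\item If \(\Delta-s<e<\Delta+s\), then (a) fails, and \(c_2-e+s>c_1\). For a long job, \(e\ge3s\) gives \(c_2-e+s\le c_2-2s\), so (b) yields \(t=c_2-e+s\). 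The unloading is then \([c_2,c_2+s)\), which starts exactly when the pending unloading ends. For a short job (b) is impossible, so \(t=c_2\).
\end{enumerate}

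The hypothesis \(\Delta\ge2s\) is not needed explicitly in the third case. For a long job it follows from \(e\ge3s\) together with \(e<\Delta+s\).

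The computations themselves are elementary. The step I expect to need the most care is the justification that these three placements are exhaustive and that nothing else constrains \(t\). This rests entirely on the clean-state hypothesis: server operations ending by \(c_1\) are irrelevant because \(t\ge c_1\), and machine feasibility holds automatically because the job is appended after \(c_1\). I would state this reduction explicitly at the outset, including the half-open adjacency convention that permits equality in each inequality. I would then run the comparison of earliest starts.
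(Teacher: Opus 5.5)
Your proposal is correct and follows essentially the same route as the paper. Both arguments reduce feasibility to where the new job's loading and unloading sit relative to the single pending unloading $[c_2-s,c_2)$: the job starts at $c_1$ when both of its server operations can lie before that interval or straddle it, and in the conflict band it starts at the bridge time $c_2-e+s$, which requires $e\ge 3s$. Your enumeration of the three admissible placements as intervals of start times, followed by taking the earliest, is a more complete version of the paper's terse argument; in particular, it makes explicit why no other start in $(c_1,c_2)$ can occur.
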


\begin{proof}
If the new job starts at \(c_1\), its loading and unloading intervals are
\[
[c_1,c_1+s)
\quad\text{and}\quad
[c_1+e-s,c_1+e),
\]
respectively. Its loading can precede the pending unloading only if
\(\Delta\ge2s\). Under this condition, its unloading also precedes the pending
unloading when \(e\le\Delta-s\), and follows it when \(e\ge\Delta+s\).

In the remaining conflict band, placing the new unloading immediately after the
pending one requires the loading to start at \(c_2-e+s\). The loading then ends
no later than \(c_2-s\) precisely when \(e\ge3s\). Thus, only a long job can
bridge the pending unloading.
\end{proof}

The first two jobs of a block obey a similar rule. Let their execution lengths
be \(a\ge b\). If \(a-b\ge2s\), the second job starts at time \(s\); equality
closes the block through adjacent unloadings. If
\[
a-b<2s
\quad\text{and}\quad
b\ge3s,
\]
the second job bridges around the first unloading and closes the block. If
\(b<3s\), it waits until the first job has been unloaded.

\begin{lemma}[Frontier and idle invariant]
\label{lem:lpt-frontier-invariant}
A newly opened block first consists of a single job. After its second job is
inserted, the block either closes or has a clean frontier. Every subsequent
nonclosing insertion preserves a clean frontier.

If a nonclosed block \(Q\) ends in the clean state
\(\mathcal C(c_1,c_2)\), then
\begin{equation}
I(Q):=2L(Q)-E(Q)=s+\Delta,
\qquad
\Delta=c_2-c_1.
\label{eq:lpt-clean-idle}
\end{equation}
Moreover, every such state satisfies \(\Delta>s\).

If \(Q\) closes through an equality insertion, then \(I(Q)=2s\). If it
closes through a bridge, then
\[
I(Q)=2s+d
\]
with \(0\le d\le2s-1\). For a bridge formed by the first two jobs,
\(d=a-b\) may be zero; for a bridge from an existing clean state,
\(d=\Delta+s-e\) and hence \(1\le d\le2s-1\). Consequently, every closure
block satisfies
\begin{equation}
I(Q)\le4s-1.
\label{eq:lpt-closure-idle}
\end{equation}
\end{lemma}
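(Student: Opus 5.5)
Throughout, let the block start at time \(0\), and write \(L(Q)\) for the block's completion time, so that \(I(Q)=2L(Q)-E(Q)\) is the aggregate machine idle time inside the block. The proof is an induction over the insertions within a block. Each step uses Lemma~\ref{lem:lpt-clean-insertion} to say where the next job starts, and then tracks three quantities: the frontier shape, \(\Delta\), and \(I\).

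\emph{Two-job base case.} Let the first job have length \(a\). The second job, of length \(b\le a\), falls into one of three cases described after Lemma~\ref{lem:lpt-clean-insertion}.

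If \(a-b>2s\), the second job starts at \(s\) and unloads at \(s+b\le a-s\). Its unloading therefore lies before \([a-s,a)\), and the only server operation beyond \(c_1=s+b\) is that pending unloading. This gives the clean state \(\mathcal C(s+b,a)\) with \(\Delta=a-b-s>s\). The idle time is
\[
I=2a-(a+b)=a-b=s+\Delta .
\]

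If \(a-b=2s\), the unloadings are adjacent and the block closes with \(I=a-b=2s\).

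If \(a-b<2s\) and \(b\ge3s\), the second job bridges. It starts at \(a-b+s\) and ends at \(a+s\). Then \(L=a+s\) and
\[
I=2(a+s)-a-b=2s+(a-b),
\]
so \(d=a-b\in[0,2s-1]\).

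If \(b<3s\), the second job waits until time \(a\). This is the waiting-block case: the block ends as the single job \(a\), and a new block opens. I must check how the statement treats this. I would read "the block either closes or has a clean frontier" as applying only when the second job joins the same block. A waiting insertion instead starts a new block from a reset.

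\emph{Inductive step.} Assume the clean state \(\mathcal C(c_1,c_2)\) with \(I=s+\Delta\) and \(\Delta>s\). I examine each case of Lemma~\ref{lem:lpt-clean-insertion}.

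Case (1), \(\Delta<2s\): the job starts at \(c_2\), so the block becomes a waiting block and the invariant need not propagate.

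Case (2a), \(e\le\Delta-s\): the new job occupies \([c_1,c_1+e)\) and unloads before \(c_2-s\). The new state is \(\mathcal C(c_1+e,c_2)\) with \(\Delta'=\Delta-e\ge\ldots\) Here a problem appears: \(\Delta'\ge s\) only follows from \(e\le\Delta-s\). The case \(e=\Delta-s\) gives \(\Delta'=s\), which would make the unloadings adjacent. That is exactly the equality-closure situation, so I would treat \(e=\Delta-s\) as a closure. Its idle time is
\[
I=s+\Delta-e=2s .
\]
(Here \(L\) is unchanged, \(E\) grows by \(e\), and the idle time drops by \(e\).) For \(e<\Delta-s\) the state stays clean with \(\Delta'>s\) and
\[
I'=s+\Delta-e=s+\Delta'.
\]

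Case (2b), \(e\ge\Delta+s\): the new job finishes at \(c_1+e\ge c_2+s\), so the previously later machine becomes the earlier one. The new state is \(\mathcal C(c_2,c_1+e)\), and it is clean because the pending unloading \([c_2-s,c_2)\) ends by \(c_2\). Then \(\Delta'=e-\Delta\ge s\), and
\[
I'=2(c_1+e)-E-e=(s+\Delta)-2\Delta+e=s+\Delta'.
\]
Equality \(e=\Delta+s\) makes the two unloadings adjacent, which is again an equality closure with \(I=2s\). Otherwise \(\Delta'>s\).

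Case (3), \(\Delta-s<e<\Delta+s\): a long job bridges and ends at \(c_2+s\). Then
\[
I=2(c_2+s)-E-e=s+\Delta+2s-2\Delta+\ldots
\]
Written carefully, using \(2c_2=2c_1+2\Delta\):
\[
I=2(c_2+s)-(E_{\text{old}}+e)=s+\Delta+2s+\Delta-e-\Delta\ldots
\]
I would redo this computation cleanly to obtain \(2s+(\Delta+s-e)\). The bounds on \(e\) give \(d=\Delta+s-e\in[1,2s-1]\). A short job in this band waits, which again produces a waiting block.

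\emph{Conclusion and main obstacle.} Combining the cases, every closure block has \(I\le4s-1\). The main obstacle is the boundary bookkeeping. Specifically, I must identify \(e=\Delta\mp s\) as equality closures, and this is what guarantees the strict inequality \(\Delta>s\) in every surviving clean state. I must also verify cleanliness after case (2b), where the roles of the machines swap.
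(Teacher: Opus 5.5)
Your proposal is correct and follows essentially the same route as the paper. Both arguments handle the first two jobs as a base case, then induct over clean-state insertions via Lemma~\ref{lem:lpt-clean-insertion}, treating the boundary values \(e=\Delta\mp s\) as equality closures (so \(I=2s\) and \(\Delta'>s\) in every surviving clean state). The only unfinished piece is your case (3) algebra. It is immediate once you note that the block length grows by exactly \(s\) (from \(c_2\) to \(c_2+s\)) while the volume grows by \(e\), so \(I'=I+2s-e=2s+d\) with \(d=\Delta+s-e\), exactly as in the paper.
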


\begin{proof}
Consider first the state after the first two jobs of a nonclosed block. Their
lengths satisfy \(a-b>2s\), the second job starts at time \(s\), and
\[
L(Q)=a,\qquad E(Q)=a+b.
\]
The completion gap is
\[
\Delta=a-(s+b)=a-b-s,
\]
and hence
\[
I(Q)=2a-(a+b)=a-b=s+\Delta.
\]
In particular, \(\Delta>s\).

It remains to check that a nonclosing insertion preserves the clean-state
structure and the idle identity. If \(e<\Delta-s\), the new job completes
before the pending unloading. The new frontier is
\[
\mathcal C(c_1+e,c_2),
\qquad
\Delta'=\Delta-e>s.
\]
The block length is unchanged and its execution volume increases by \(e\), so
\[
I'=I-e=s+\Delta-e=s+\Delta'.
\]

If \(e>\Delta+s\), the new job completes after the later machine. The new
frontier is
\[
\mathcal C(c_2,c_1+e),
\qquad
\Delta'=e-\Delta>s.
\]
The block length increases by \(e-\Delta\), while its execution volume
increases by \(e\). Therefore,
\[
I'=I+2(e-\Delta)-e=s+e-\Delta=s+\Delta'.
\]

An equality insertion eliminates the final gap and leaves the aggregate idle time
\(2s\). If the first two jobs form a bridge, write \(a=b+d\); then
\(0\le d\le2s-1\) and a direct calculation gives \(I(Q)=2s+d\). If a later
insertion bridges from a clean state, the displacement is
\[
d=\Delta+s-e,
\]
and the strict conflict-band inequalities give \(1\le d\le2s-1\). Since the
bridge extends the block by \(s\) and adds the execution volume \(e\), we get
\[
I'=I+2s-e=s+\Delta+2s-e=2s+d.
\]
\end{proof}

\subsection{Lower bounds used in the LPT analysis}
\label{subsec:lpt-lower-bounds}

We use the two lower bounds established in Section~\ref{sec:prelim}. From
Lemma~\ref{lem:basic_lb_s}, we get
\begin{equation}
C^*\ge \frac{E+2s}{2}.
\label{eq:lpt-machine-lb}
\end{equation}
From Lemma~\ref{lem:short_job_lower_bounds}, we obtain 
\begin{equation}
C^*
\ge
2sn+\sum_{j:e_j<3s}(e_j-2s).
\label{eq:lpt-short-lb}
\end{equation}
The first one controls the aggregate machine load, while the second one captures the
additional time forced by short jobs.

\subsection{Exact ratio for long-job instances}
\label{subsec:lpt-long-jobs}

Let
\[
\mathcal I_{\ge s}
=
\{I:p_j\ge s\text{ for every }j\}.
\]

\begin{theorem}[Exact LPT ratio for long jobs]
\label{thm:lpt-long-jobs}
For every integer \(s\ge2\), we obtain 
\[
\boxed{
\rho_{\mathrm{LPT}}^{\ge s}(s)
=
\frac{11s-3}{8s-2}.
}
\]
\end{theorem}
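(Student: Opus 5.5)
The proof splits into a matching lower-bound family and an upper bound. Throughout, I use only Lemma~\ref{lem:lpt-clean-insertion}, Lemma~\ref{lem:lpt-frontier-invariant} and the machine-load bound \eqref{eq:lpt-machine-lb}. The short-job bound is irrelevant on \(\mathcal I_{\ge s}\), since \(\mathcal J_<=\emptyset\).

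\textbf{Lower bound.} I would look for a small instance, most likely four or five long jobs, whose optimum attains \eqref{eq:lpt-machine-lb} exactly with \(C^*=8s-2\), that is \(E=14s-4\). On this instance LPT should reach makespan \(11s-3\). The mechanism I expect is the following.
\begin{enumerate}
\item The two largest jobs form an early bridge closure, which forces a reset.
\item A later job falls into the conflict band \(\Delta-s<e<\Delta+s\) of Lemma~\ref{lem:lpt-clean-insertion}.
\item That job therefore bridges with the maximal displacement \(d=2s-1\), or else waits.
\item Each such event adds idle time close to the \(4s-1\) allowed by \eqref{eq:lpt-closure-idle}.
\end{enumerate}
The terms \(-3\) and \(-2\) in the ratio suggest that the displacements \(d=2s-1\) are exactly what the extremal instance realises; integrality of processing times then prevents \(d=2s\). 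The steps are:
\begin{enumerate}
\item Choose the execution lengths as affine functions of \(s\), all at least \(3s\).
\item Simulate LPT with the insertion lemma and check that it produces \(C_{\mathrm{LPT}}=11s-3\).
\item Exhibit an explicit optimal two-machine schedule of length \(8s-2\), with its server intervals listed.
\end{enumerate}
A first trial with lengths \((4s-2,4s-2,3s,3s)\) gives only \(9s-2\), so the family needs a larger imbalance. I would search small affine families systematically until the value \(11s-3\) appears.

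\textbf{Upper bound.} I would decompose the LPT schedule into consecutive blocks separated by reset states, as in Subsection~\ref{subsec:lpt-block-structure}. For each block \(Q\), Lemma~\ref{lem:lpt-frontier-invariant} gives \(2L(Q)=E(Q)+I(Q)\). The idle term is \(I(Q)\le 4s-1\) for closure blocks and \(I(Q)=s+\Delta\) for waiting and terminal blocks. Summing over blocks gives
\[
2C_{\mathrm{LPT}}=E+\sum_Q I(Q).
\]
It then suffices to bound \(\sum_Q I(Q)\) by \(\frac{11s-3}{4s-1}C^*-E\), using \(C^*\ge (E+2s)/2\) and \(C^*\ge e_{\max}\). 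The terminal block needs a classical LPT argument. If the last job is \(J_n\) and it starts at \(c_1\), the usual Graham-type estimate applies, with \(e_n\) small relative to \(C^*\) whenever \(n\) is large. If instead it bridges or waits, the extra \(\Delta\) is at most about \(e_n+s\).

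\textbf{Main obstacle.} The hard step is amortising the idle time of closure blocks, especially two-job bridge blocks that each cost up to \(4s-1\), against the job volume they contain. For each bridge block of jobs \(a\ge b\ge 3s\), I would charge \(I(Q)=2s+(a-b)\) to \(E(Q)=a+b\ge 2b+(a-b)\). The LPT ordering gives that all later jobs are at most \(b\). When many blocks exist, the ratio \(I(Q)/E(Q)\) is then at most roughly \((4s-1)/(8s-1)\). When there are only one or two blocks, I would instead compare directly with \(e_{\max}\) and with the optimal pairing of the few largest jobs. This requires a case analysis on the number of jobs and on which blocks are bridges.

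I expect the critical configuration, a bridge block followed by a terminal block with a conflict-band job, to reproduce exactly \((11s-3)/(8s-2)\). Consistency with the lower-bound instance will serve as the check that the case analysis is tight.
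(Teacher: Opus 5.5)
\textbf{Lower bound.} This half is not yet proved. You describe a search instead of exhibiting an instance, and the mechanism you expect (four or five jobs, with a later conflict-band job bridging after an early closure) is not the extremal one. Three jobs suffice, with execution lengths $6s-2,\,4s-1,\,4s-1$. The two largest jobs form a two-job bridge closure with the maximal displacement $d=2s-1$, and this block closes at $7s-2$. The third job then starts from the reset and finishes at $11s-3$. Now place both $4s-1$ jobs consecutively on one machine and start the $6s-2$ job at time $s$ on the other. This is a feasible schedule of length $8s-2$, which matches the bound from two jobs sharing a machine; as you guessed, $E=14s-4$.

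\textbf{Upper bound.} Here there is a genuine gap: charging each two-job bridge block's idle time to its own volume cannot work. With $b=3s$ and $d=2s-1$ such a block has $I/E=(4s-1)/(8s-1)\approx 1/2$. A per-block bound of this size gives only $C_{\mathrm{LPT}}\le\frac{6s-1}{8s-1}E$, i.e.\ a ratio of about $3/2$. By contrast, with the machine-load bound you need $\sum_Q I(Q)\le\frac{3s-1}{8s-2}E+2sr_s$, where $r_s=(11s-3)/(8s-2)$.

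You note that later jobs are at most $b$, but this has to be used \emph{across} blocks. Write the two-job bridge blocks as $(b_i,d_i)$, with larger job $b_i+d_i$. The LPT order gives $b_{i+1}+d_{i+1}\le b_i$, hence $d_{i+1}\le b_i-b_{i+1}$: a large displacement must be paid for by a drop in $b$. The paper measures each block by its surplus $E(Q)-\alpha_sL(Q)$ with $\alpha_s=2/r_s$. Equality closures, bridge closures with at least three jobs, and waiting blocks all have positive surplus. The bridge deficits telescope to a total of at most $(8s^2-5s+1)/(11s-3)<2s$, and this is absorbed by the additive $2s$ in $C^*\ge(E+2s)/2$.

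A second ingredient is missing for the residual extremal configuration, a two-job bridge block followed by a single job: you need a server-based lower bound. The bounds you name ($e_{\max}$, $(E+2s)/2$, and pairing) are not enough there. Take $s=3$ and execution lengths $(14,9,9)$, i.e.\ $(5s-1,3s,3s)$. LPT gives makespan $26$, while those three bounds give at most $19$, and $26/19>15/11=r_3$. For this three-job case one needs $C^*\ge e_{\max}+2s$ (here $C^*=20$). It is proved by cases on which job runs alone on its machine and on the order of the server operations.

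Finally, your terminal-block sketch (a Graham-type estimate, with $\Delta$ at most about $e_n+s$) does not give the quantitative control the paper gets. The paper uses the density bound $E(Q)\ge\Gamma_s(q)L(Q)$ for the prefix blocks and the final-gap bound $g\le q-4s$.
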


\begin{proof}
The upper bound is proved in Appendix~\ref{app:lpt-long-proof}.

For tightness, consider three jobs with execution lengths \(6s-2,4s-1,4s-1\).
The second job bridges around the unloading of the first one. Their block closes at time \(7s-2\), after which the third job is processed serially. Thus, \(C_{\mathrm{LPT}}=11s-3\).

A feasible schedule of length \(8s-2\) is obtained by processing the two jobs
of length \(4s-1\) consecutively on one machine and starting the job of length
\(6s-2\) at time \(s\) on the other machine. The server operations then occur, in
order, as the loading of the first \(4s-1\) job, the loading of the \(6s-2\) job,
the unloading of the first \(4s-1\) job, the loading of the second \(4s-1\) job,
the unloading of the \(6s-2\) job, and the unloading of the second \(4s-1\) job;
adjacent server operations are allowed. Since two of the three jobs must share one
machine, we get \(C^*\ge(4s-1)+(4s-1)=8s-2\).
The constructed schedule attains this lower bound and therefore,
\[
\frac{C_{\mathrm{LPT}}}{C^*}
=
\frac{11s-3}{8s-2}.
\]
\end{proof}

For \(s=1\), every job is long and the expression above evaluates to \(4/3\),
consistent with the tight unit-server-time LPT bound of
\citet{jiang2015_loading_unloading}. The present theorem is stated for \(s\ge2\), so
the unit case is not used in the proof below.

As \(s\) increases, the exact long-job ratio approaches \(11/8\). Thus, the
departure from the unit-time result cannot be attributed solely to short jobs;
processing-time granularity already affects LPT within the long-job subclass.

\subsection{Unrestricted instances}
\label{subsec:lpt-unrestricted}

A \emph{short reset} occurs when a short job starts from a reset state. The
first such reset separates the part of the schedule whose idle content can be
controlled from a serial short-job tail.

\begin{lemma}[Prefix--tail decomposition]
\label{lem:lpt-prefix-tail}
An LPT schedule has one of the following forms:

\begin{enumerate}
\item No short reset occurs. In this case, deleting every short job leaves the
      LPT makespan unchanged.
\item A first short reset occurs. The schedule can then be partitioned
      chronologically into a prefix \(P\) followed by a serial terminal part
      \(T\). The terminal part consists entirely of short jobs, except that it
      may begin with one long job forming a one-job block.
\end{enumerate}

Let \(L_P,E_P,n_P\) denote the length, execution volume, and number of jobs in
\(P\), and define
\[
I_P=2L_P-E_P.
\]
Then
\begin{equation}
I_P\le\frac{5s-2}{2}\,n_P.
\label{eq:lpt-prefix-idle}
\end{equation}
If \(E_T\) denotes the total execution length of the jobs in \(T\), then
\begin{equation}
C_{\mathrm{LPT}}=L_P+E_T.
\label{eq:lpt-prefix-tail-length}
\end{equation}
\end{lemma}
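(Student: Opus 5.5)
We organise the argument around the block structure of Lemma~\ref{lem:lpt-frontier-invariant} and the insertion rules of Lemma~\ref{lem:lpt-clean-insertion}.

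\textbf{Structural part.} Because LPT processes jobs in non-increasing order of \(e_j\), all long jobs come before all short jobs in the list.

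First suppose that no short reset occurs. Every short job is then inserted inside an existing block. By Lemma~\ref{lem:lpt-clean-insertion}, a short job never bridges. So it either fits before the pending unloading (\(e\le\Delta-s\)) or starts at \(c_2\). Starting at \(c_2\) from a clean state is exactly a reset, which is excluded. Hence each short job completes before the pending unloading and does not extend the block. We must also check that it does not alter the timing of later long jobs. This holds because all long jobs precede all short jobs in the list, so no long job is inserted after a short job. Deleting the short jobs therefore leaves the makespan unchanged.

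Now suppose a first short reset occurs. Let \(P\) consist of all jobs scheduled strictly before that reset. We must allow for one exception: the last long job may itself have started from a reset (a one-job waiting block) immediately before the short jobs begin. In that case we place it at the head of \(T\).

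After the first short reset, every later job is short. Consider a block opened by a short job \(b\) with \(e_b<3s\). The next job \(a\) is also short with \(e_a\le e_b\). The first-two-jobs rule (\(b<3s\) means it waits) forces \(a\) to start only after \(b\) is unloaded. Inductively, the remaining jobs run serially with no gaps. Moreover, once the server is free at a block end, LPT starts immediately, so there is no idle time between consecutive jobs. This gives \(C_{\mathrm{LPT}}=L_P+E_T\). A leading one-job long block in \(T\) is handled in the same way.

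\textbf{Idle bound \eqref{eq:lpt-prefix-idle}.} We decompose \(P\) into blocks \(Q\) and sum \(I(Q)=2L(Q)-E(Q)\). Consecutive blocks abut at reset times, so \(I_P=\sum_Q I(Q)\). It then suffices to show \(I(Q)\le\frac{5s-2}{2}n_Q\) for each block. There are three cases.
\begin{itemize}
\item \textbf{Closure blocks.} These have \(n_Q\ge2\) and \(I(Q)\le4s-1\le 5s-2\) for \(s\ge1\), which is within budget.
\item \textbf{Single-job blocks.} Here \(I(Q)=e\), the job length. This can be large, so the per-job budget fails as stated. To repair this, we will charge against lengths: a waiting one-job block is followed by a job that could not be inserted, and LPT order bounds the relevant quantities.
\item \textbf{Nonclosed blocks with \(n_Q\ge2\).} These have \(I(Q)=s+\Delta\), with \(\Delta\) bounded by the waiting condition, since otherwise the next job would have been inserted. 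In a waiting block the next job \(e'\) satisfies \(\Delta<2s\), or the short/conflict-band case \(\Delta-s<e'<\Delta+s\) holds. A waiting block inside \(P\) must be followed by a long reset, which requires \(\Delta<2s\) (a long job in the conflict band bridges instead). Hence \(I(Q)<3s\le\frac{5s-2}{2}\cdot2\).
\end{itemize}

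\textbf{Main obstacle.} The hard step is single-job blocks and the last block of \(P\) before the short reset, where \(\Delta\) may be large; the short reset occurs exactly when a short job lies in the conflict band or \(\Delta<2s\). For the final block we plan to use the following fact: the reset was triggered by a short job with \(e'\ge\Delta-s\), so \(\Delta<e'+s<4s\) and \(I<5s\). We then spread this \(5s\) over at least two jobs, giving an average of at most \(\frac{5s-2}{2}\) after using integrality (\(e'\le3s-1\), so \(\Delta\le4s-2\) and \(I\le5s-2\)).

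One-job blocks inside \(P\) are waiting blocks whose successor is long. Such a block can only arise if the previous block had \(\Delta<2s\). We therefore intend to amortise it by merging it with the neighbouring block into a pair. If this merging fails, we would redefine \(P\) to end before such isolated long one-job blocks, consistent with the statement's allowance that \(T\) may begin with one long one-job block.
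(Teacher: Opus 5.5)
\textbf{Structural part.} This follows the paper's route: long jobs precede short jobs; without a short reset every short job is absorbed before the pending unloading; after the first short reset the tail is serial, possibly headed by a one-job long block. There are three small omissions, all easily fixed:
\begin{itemize}
\item The first short job may be the \emph{second} job of a block. There the first-two-jobs rule applies, not Lemma~\ref{lem:lpt-clean-insertion}: since it does not wait, $a-e\ge2s$, so it starts at $s$ and ends by $a-s$.
\item From a clean state you must also exclude the alternative $e\ge\Delta+s$. It is impossible because $e<3s\le\Delta+s$.
\item An equality insertion closes the block, after which any further short job would start from a reset.
\end{itemize}

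\textbf{The gap is in the idle bound.} Your per-block budget $I(Q)\le\frac{5s-2}{2}n_Q$ needs $n_Q\ge2$ for every block of $P$. You leave single-job blocks unresolved, asserting that one-job blocks inside $P$ are waiting blocks whose successor is long, and you propose to amortise them or to shrink $P$. Neither repair can work. Such a block has $I(Q)=e$, which is unbounded relative to $s$, so no merging with a neighbour yields a bound of order $s$ per job. Moving long jobs from the middle of the schedule into $T$ would destroy the seriality of $T$, and with it the identity $C_{\mathrm{LPT}}=L_P+E_T$.

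\textbf{The missing observation.} These blocks do not exist. After a single long job of length $a$, the next long job $b\le a$ either starts at time $s$ (if $a-b\ge2s$) or, because $b\ge3s$, bridges around the first unloading (if $a-b<2s$). It never waits for a reset. Hence a one-job block can only be followed by a waiting \emph{short} job. That is, it is the block immediately preceding the first short reset, which is already assigned to $T$.

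With this, every block in $P$ has at least two jobs, and your three estimates give the claim exactly as in the paper:
\begin{itemize}
\item $4s-1$ for closure blocks;
\item $3s-1$ for waiting blocks followed by a long job;
\item $5s-2$ for the final block followed by a short job in the conflict band.
\end{itemize}
In the last estimate, the conflict band gives the strict inequality $e'>\Delta-s$, not $e'\ge\Delta-s$ as you wrote. This strictness, together with integrality and $e'\le3s-1$, is what yields $\Delta\le4s-2$ and hence $I(Q)\le5s-2$ rather than $5s-1$.
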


\begin{proof}
Because Algorithm LPT orders jobs by non-increasing processing time, every long job precedes every
short job. Suppose first that no short reset occurs, and let \(J_j\) be the first short
job. It cannot be the first job of a block. It may be the second job after a
single long job of length \(a\). Let its execution length be \(e<3s\). Since
it does not wait for a reset after the long job, the first-two-job rule gives
\[
a-e\ge2s.
\]
It starts at time \(s\) and completes no later than \(a-s\), so it does not
extend the current makespan. If \(a-e=2s\), the two unloadings are adjacent and
the block closes. A further short job would then start from a reset, so equality
can occur only when this is the last job. If \(a-e>2s\), the resulting frontier
is clean.

If the first short job is not the second job of its block, or for any subsequent
short job, the job is inserted from a clean state
\(\mathcal C(c_1,c_2)\). Because it does not create a short reset,
Lemma~\ref{lem:lpt-clean-insertion} implies
\[
\Delta\ge2s.
\]
The alternative \(e\ge\Delta+s\) is impossible because \(e<3s\) and
\(\Delta+s\ge3s\). Hence
\[
e\le\Delta-s.
\]
If the inequality is strict, the job completes before the pending unloading and
leaves the clean state \(\mathcal C(c_1+e,c_2)\), without changing the
makespan. If equality holds, the block closes at \(c_2\), and no further short
job can remain. Induction shows that none of the short jobs changes the LPT
makespan.

Let \(I_L\) be the subinstance containing only the long jobs. We have therefore
proved that
\begin{equation}
C_{\mathrm{LPT}}(I)=C_{\mathrm{LPT}}(I_L).
\label{eq:lpt-delete-short}
\end{equation}

Suppose now that a first short reset occurs. Every subsequent job is also short
because of the LPT ordering. A short job has the processing time less than \(s\),
so no loading or unloading operation can be performed between its own loading
and unloading operations. Once the first short job starts from a reset, each
later job must therefore wait until its predecessor has been unloaded. These
jobs form a serial tail.

The block immediately preceding the first short reset may contain only one long
job. In that case, this long job is included in \(T\), and the schedule is
serial from its loading onward. Otherwise, \(T\) begins with the first
short-reset job. This proves \eqref{eq:lpt-prefix-tail-length}.

It remains to estimate the idle content of \(P\). Every block retained in
\(P\) contains at least two jobs. Indeed, suppose a one-job block of length
\(a\) were followed by a long job of length \(b\le a\). If
\(a-b\ge2s\), the second job starts at time \(s\); if \(a-b<2s\), it bridges
around the first unloading. In neither case does it wait for a reset. Thus, a
one-job block can occur only immediately before the first short reset, and that
block has already been assigned to \(T\).

We now bound the idle content of each block retained in \(P\).

\begin{itemize}
\item A closure block satisfies \(I(Q)\le4s-1\) by
      \eqref{eq:lpt-closure-idle}.
\item A waiting block followed by a long job ends in a clean state with
      \(s<\Delta<2s\), and therefore
      \[
      I(Q)=s+\Delta\le3s-1.
      \]
\item If \(T\) begins directly with the first short-reset job, the final
      waiting block of \(P\) may instead be followed by that short job. A
      narrow-gap reset again gives \(I(Q)\le3s-1\). In the remaining
      conflict-band case,
      \[
      \Delta-s<e<\Delta+s,
      \qquad e<3s.
      \]
      Integrality gives
      \[
      \Delta\le e+s-1\le4s-2,
      \]
      and hence
      \[
      I(Q)=s+\Delta\le5s-2.
      \]
\end{itemize}

Thus, every block in \(P\) contains at least two jobs and has an idle content at
most \(5s-2\). Summing up over the blocks yields
\[
I_P\le\frac{5s-2}{2}\,n_P.
\]
\end{proof}

\begin{theorem}[LPT bound for unrestricted instances]
\label{thm:lpt-unrestricted-upper}
For every integer \(s\ge2\),
\[
\boxed{
\rho_{\mathrm{LPT}}(s)
\le
\frac{13s-2}{8s}.
}
\]
\end{theorem}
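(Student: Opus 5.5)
The plan is to split along the two cases of Lemma~\ref{lem:lpt-prefix-tail}. In each case we bound $C_{\mathrm{LPT}}$ by a convex combination of the machine-load bound \eqref{eq:lpt-machine-lb} and the short-job bound \eqref{eq:lpt-short-lb}.

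\emph{Case 1 (no short reset).} By \eqref{eq:lpt-delete-short}, $C_{\mathrm{LPT}}(I)=C_{\mathrm{LPT}}(I_L)$. The subinstance $I_L$ belongs to $\mathcal I_{\ge s}$, so Theorem~\ref{thm:lpt-long-jobs} gives $C_{\mathrm{LPT}}(I_L)\le\frac{11s-3}{8s-2}\,C^*(I_L)$. Deleting jobs cannot increase the optimum, so $C^*(I_L)\le C^*(I)$. It then remains to check $\frac{11s-3}{8s-2}\le\frac{13s-2}{8s}$. Cross-multiplying, this is equivalent to $16s^2-18s+4\ge0$, which holds for $s\ge2$.

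\emph{Case 2 (a first short reset occurs).} By \eqref{eq:lpt-prefix-tail-length} and the definition of $I_P$,
\[
C_{\mathrm{LPT}}=L_P+E_T=\frac{E_P+I_P}{2}+E_T=\frac{E+2s}{2}+\frac{I_P-2s}{2}+\frac{E_T}{2},
\]
where $E=E_P+E_T$. The first term is at most $C^*$ by \eqref{eq:lpt-machine-lb}. For the remainder, \eqref{eq:lpt-prefix-idle} gives
\[
\frac{I_P-2s}{2}+\frac{E_T}{2}\le\frac{5s-2}{4}\,n_P+\frac{E_T}{2}.
\]
Assume first that $T$ consists only of short jobs. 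Then \eqref{eq:lpt-short-lb} yields $C^*\ge 2sn_P+E_T$: each prefix job contributes at least $2s$, and each tail job contributes $2s+p_j=e_j$. Multiplying this by $\frac{5s-2}{8s}$ gives
\[
\frac{5s-2}{8s}\,C^*\ge\frac{5s-2}{4}\,n_P+\frac{5s-2}{8s}\,E_T\ge\frac{5s-2}{4}\,n_P+\frac{E_T}{2},
\]
where the last step uses $5s-2\ge4s$ for $s\ge2$. Hence $C_{\mathrm{LPT}}\le\bigl(1+\frac{5s-2}{8s}\bigr)C^*=\frac{13s-2}{8s}\,C^*$. The leftover $-s$ from the $-2s/2$ term is slack and can be spent in the next step.

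\emph{Main obstacle.} The difficulty is the subcase in which $T$ begins with one long job $J_0$ forming a one-job block. The short-job bound credits $J_0$ with only $2s$, while $E_T$ contains its full $e_0\ge3s$, leaving a deficit of roughly $(e_0-2s)/2$.

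My first attempt is to exploit the structure forcing this configuration. $J_0$ is a one-job block, and by the proof of Lemma~\ref{lem:lpt-prefix-tail} it is followed directly by a short job that cannot be inserted. Because of the LPT order, $e_0$ is the smallest long execution length, and all long jobs in $P$ have length at least $e_0$. I would then absorb the deficit in one of two ways. One option is to use the unused $-s$ together with part of the machine-load term, since $E_P\ge 2e_0 n_P/2$ makes the machine bound larger when $e_0$ is large. If $P$ is empty, the other option is to compare directly with $C^*\ge e_0$ and with $C^*\ge$ the short-job bound, taking a convex combination of the two.

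If this fails, the fallback is to treat $J_0$ as an extra block with idle content $I=e_0$ counted in $P$. One would then re-derive a per-job idle bound of the form \eqref{eq:lpt-prefix-idle} that includes it, using the narrow-gap or conflict-band inequality between $e_0$ and the following short job to cap $e_0$.
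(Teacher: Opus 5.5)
Your Case~1 and the all-short subcase of Case~2 are correct, and they are the paper's argument in rearranged form. With $\mu:=\frac{5s-2}{8s}$, your estimate $C_{\mathrm{LPT}}\le C^*+\mu C^*$ is the paper's convex combination with weight $\lambda=\frac{8s}{13s-2}=\frac{1}{1+\mu}$. Your leftover $-s$ is exactly the paper's reserved term $\lambda s$.

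The genuine gap is the subcase in which $T$ begins with a one-job long block $J_0$. You correctly flag it as the main obstacle, but you only list candidate strategies and carry out none of them, so the theorem is not proved there. Neither strategy works as stated:
\begin{itemize}
\item The machine-load bound has already been spent in full to produce the leading $C^*$. There is no ``part'' of it left to absorb the deficit.
\item Your fallback points in the right direction, since it uses the inequality between $e_0$ and the following short job. As formulated, however, it aims at a per-job idle bound of the form \eqref{eq:lpt-prefix-idle} that includes $J_0$, and such a bound is false. For $P=\varnothing$ the one-job block $\{J_0\}$ alone has idle content $e_0\ge3s>\frac{5s-2}{2}$.
\end{itemize}

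The missing ingredient is an explicit cap on $e_0$, combined with your $-s$ slack.
\begin{itemize}
\item Let $b$ be the short job that follows $J_0$. Since it waits for a reset, the first-two-job rule gives $e_0-b\le2s-1$. Together with $b\le3s-1$, this yields $e_0\le5s-2$.
\item Move $J_0$ into the prefix and let $T'=T\setminus\{J_0\}$, which consists of short jobs only. Then $I_P+e_0\le\frac{5s-2}{2}(n_P+1)+\frac{5s-2}{2}$.
\item Multiplying the short-job bound $C^*\ge2s(n_P+1)+E_{T'}$ by $\mu$ supplies a budget of $\frac{5s-2}{4}(n_P+1)$. After halving, the prefix term in your decomposition exceeds this budget by at most $\frac{5s-2}{4}$.
\item The slack pays $s$ of this excess. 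The remainder $\frac{s-2}{4}$ is covered by $(\mu-\tfrac12)E_{T'}=\frac{s-2}{8s}E_{T'}$, because $E_{T'}\ge b\ge2s+1$.
\end{itemize}
The paper does the equivalent computation with $d=e_0-2s\le3s-2$, the short-job bound $2sn_P+E_T-d$, and $E_T\ge2d+2s+1$, ending at $\frac{(s-1)(s-2)}{13s-2}\ge0$.

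Both ingredients are needed. The residual is decreasing in $e_0$, even after using $b\ge e_0-2s+1$, so some upper bound on $e_0$ is unavoidable. The $-s$ slack absorbs most of the overshoot. This is why the subcase cannot be settled by the loose absorption arguments you sketch.
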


\begin{proof}
For any \(\lambda\in[0,1]\), the convex combination of
\eqref{eq:lpt-machine-lb} and \eqref{eq:lpt-short-lb} gives
\begin{equation}
C^*
\ge
\lambda\frac{E+2s}{2}
+
(1-\lambda)
\left(
2sn+\sum_{j:e_j<3s}(e_j-2s)
\right).
\label{eq:lpt-combined-lb}
\end{equation}

The idle estimate in \eqref{eq:lpt-prefix-idle} can be covered whenever
\[
2s(1-\lambda)
\ge
\frac{\lambda(5s-2)}{4}.
\]
The largest admissible weight is
\begin{equation}
\lambda=\frac{8s}{13s-2}.
\label{eq:lpt-weight}
\end{equation}

If no short reset occurs, let \(I_L\) be the long-job subinstance. By
\eqref{eq:lpt-delete-short} and Theorem~\ref{thm:lpt-long-jobs}, we obtain 
\[
\begin{aligned}
C_{\mathrm{LPT}}(I)
&=C_{\mathrm{LPT}}(I_L)\\
&\le\frac{11s-3}{8s-2}C^*(I_L)\\
&\le\frac{11s-3}{8s-2}C^*(I),
\end{aligned}
\]
because deleting jobs cannot increase the optimal makespan. Finally,
\[
\frac{11s-3}{8s-2}<\frac{13s-2}{8s},
\qquad s\ge2.
\]

Suppose that a first short reset occurs and use the decomposition \(P\cup T\)
from Lemma~\ref{lem:lpt-prefix-tail}. We reserve the constant term
\(\lambda s\) in
\[
\lambda\frac{E+2s}{2}=\lambda\frac E2+\lambda s
\]
for the terminal part.

After discarding any additional short-job terms belonging to \(P\), the
contribution of the prefix to \eqref{eq:lpt-combined-lb} is at least
\[
\lambda\frac{E_P}{2}+2s(1-\lambda)n_P.
\]
Since \(E_P=2L_P-I_P\), we get 
\begin{equation}
\begin{aligned}
\lambda\frac{E_P}{2}+2s(1-\lambda)n_P
&=\lambda L_P-\frac{\lambda}{2}I_P
  +2s(1-\lambda)n_P\\
&\ge
\lambda L_P+
\left(
2s(1-\lambda)-\frac{\lambda(5s-2)}4
\right)n_P\\
&=\lambda L_P.
\end{aligned}
\label{eq:lpt-prefix-payment}
\end{equation}

If every job in \(T\) is short, its contribution to
\eqref{eq:lpt-short-lb} is \(E_T\). Including the reserved term \(\lambda s\),
the terminal contribution to \eqref{eq:lpt-combined-lb} is
\[
\lambda\frac{E_T}{2}+(1-\lambda)E_T+\lambda s.
\]
Its excess over \(\lambda E_T\) is
\[
\left(1-\frac{3\lambda}{2}\right)E_T+\lambda s.
\]
Since
\[
1-\frac{3\lambda}{2}=\frac{s-2}{13s-2}\ge0,
\]
the terminal contribution is at least \(\lambda E_T\).

It remains to consider the exceptional case in which \(T\) begins with one
long job. Let \(a\) be its execution length and \(b\) the length of the first
short job. Since the short job waits for a reset rather than overlapping the
long job,
\[
a-b<2s.
\]
Set
\[
d=a-2s.
\]
Because \(a\ge3s\), \(b\le3s-1\), and all quantities are integral,
\begin{equation}
s\le d\le3s-2,
\qquad
b\ge d+1.
\label{eq:lpt-tail-d}
\end{equation}
All terminal jobs after the first one are short. Their contribution to
\eqref{eq:lpt-short-lb}, together with the \(2s\) contribution of the first
job, is therefore
\[
E_T-d.
\]
Moreover,
\begin{equation}
E_T\ge a+b\ge2d+2s+1.
\label{eq:lpt-tail-volume}
\end{equation}

The terminal contribution is at least
\[
\lambda\frac{E_T}{2}
+(1-\lambda)(E_T-d)
+\lambda s.
\]
After subtracting \(\lambda E_T\) and applying
\eqref{eq:lpt-tail-volume}, the remaining amount is at least
\[
(1-2\lambda)d
+
\left(1-\frac{3\lambda}{2}\right)(2s+1)
+\lambda s.
\]
Because \(1-2\lambda<0\), this expression is minimised at
\(d=3s-2\). Substituting \eqref{eq:lpt-weight} gives
\[
\frac{(s-1)(s-2)}{13s-2}\ge0.
\]
Thus, this terminal part also contributes at least \(\lambda E_T\).

Combining the prefix and terminal estimates with
\eqref{eq:lpt-prefix-tail-length}, we obtain
\[
C^*
\ge
\lambda L_P+\lambda E_T
=
\lambda C_{\mathrm{LPT}}.
\]
Therefore,
\[
\frac{C_{\mathrm{LPT}}}{C^*}
\le
\frac1\lambda
=
\frac{13s-2}{8s}.
\]
\end{proof}

\subsection{A lower-bound construction}
\label{subsec:lpt-lower-construction}

\begin{proposition}
\label{prop:lpt-lower}
For every integer \(s\ge2\), inequality 
\[
\rho_{\mathrm{LPT}}(s)
\ge
\frac{11s+2}{8s+1}
\]
holds.
\end{proposition}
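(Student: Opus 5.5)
The plan is to exhibit, for each integer \(s\ge2\), a single four-job instance on which LPT has makespan exactly \(11s+2\) while an explicit schedule of length \(8s+1\) exists. This gives \(\rho_{\mathrm{LPT}}(s)\ge(11s+2)/(8s+1)\) directly, without any limiting argument. The template is the long-job tightness instance of Theorem~\ref{thm:lpt-long-jobs}, where a bridge closes the first block early and the remaining jobs run serially. We add one short job, so that the serial tail of Lemma~\ref{lem:lpt-prefix-tail} is a long job followed by a short job that must wait for a reset.

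The candidate instance has execution lengths \(a\ge b\ge c\ge d\). We take \(d=2s+1\) (processing time \(1\), short) and \(c=3s\) (processing time \(s\), long). The first job has \(a=5s+1\), and \(b\) is a long job with \(a-2s<b\le a\) so that it bridges. The first tentative choice is \(b=4s\); the lower end \(b=3s+2\) is the fallback.

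First, we trace LPT using the first-two-job rule stated before Lemma~\ref{lem:lpt-frontier-invariant}. Since \(a-b<2s\) and \(b\ge3s\), job \(b\) bridges around the unloading of \(a\), and the block closes at \(a+s=6s+1\). Job \(c\) then starts from a reset, giving a one-job block. Since \(c-d=s-1<2s\) and \(d<3s\), job \(d\) cannot bridge and waits for a reset. The makespan is therefore \(C_{\mathrm{LPT}}=a+s+c+d=11s+2\).

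Second, we show \(C^*\le 8s+1\) by listing explicit start times and the resulting order of the eight server operations, as was done for the \(6s-2,4s-1,4s-1\) instance. Adjacent server operations are allowed, and the check is by half-open intervals. A consistency check is Lemma~\ref{lem:basic_lb_s}: the bound \((E+2s)/2\le 8s+1\) requires \(b\le4s\), so \(b=4s\) leaves no slack and any witness schedule must have zero extra idle time.

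The main obstacle is this second step. A first attempt at the schedule fails: placing \(c\) and then \(b\) on one machine and \(a\) on the other, with \(a\) started at time \(s\), produces an unloading collision near time \(6s\) and pushes the makespan past \(8s+1\). Putting \(c\) and \(d\) first on one machine leaves \(b\) with no feasible slot either. The plan is therefore to treat \(b\) (and, if necessary, \(a\)) as free parameters within the LPT constraints \(b\ge3s\), \(a-b<2s\), \(a+s=6s+1\). We would also allow \(d\)'s unit processing time to be placed so that its unloading fills the gaps between long-job server operations. Then we search for a machine assignment whose server sequence interleaves with zero or near-zero idle.

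If no choice of parameters with \(a+s=6s+1\) admits a schedule of length \(8s+1\), the fallback is to reparametrise with \(c\) and \(d\) adjusted. The bridge-then-serial mechanism is kept, and the target becomes \(C_{\mathrm{LPT}}=11s+2\) with \(C^*=8s+1\), where \(C^*\) is certified from below by \(e_{\max}\), the machine-load bound, or the fact that two jobs must share a machine. In every variant we must check that the witness schedule's lower bound matches \(8s+1\), so that the ratio is exactly \((11s+2)/(8s+1)\).
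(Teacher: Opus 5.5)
Your instance $(5s+1,4s,3s,2s+1)$ is the same as the paper's, and so is your LPT trace: a bridge closure at $6s+1$, the $3s$ job starting from a reset, and the short job waiting until $9s+1$, so that $C_{\mathrm{LPT}}=11s+2$. The gap is the second step. You never produce a schedule of length $8s+1$, yet the inequality $\rho_{\mathrm{LPT}}(s)\ge(11s+2)/(8s+1)$ rests exactly on the upper bound $C^*\le 8s+1$. The lower bound $C^*\ge 8s+1$ from Lemma~\ref{lem:basic_lb_s}, which you correctly note is tight at $b=4s$, only makes the ratio exact. As written, this step is a search plan with an unexecuted fallback, and your reported failures even suggest the instance might not work. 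So the proposal does not yet prove the statement.

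The missing idea is the right machine pairing. The configurations you tried put the $3s$ job together with the $4s$ job or with the short job. Instead, pair the longest job with the $3s$ job, and the short job with the $4s$ job:
\begin{itemize}
\item On the first machine, start $5s+1$ at time $0$ and $3s$ at time $5s+1$.
\item On the second machine, start $2s+1$ at time $s$ and $4s$ at time $3s+1$.
\end{itemize}
The server intervals are then $[0,s)$, $[s,2s)$, $[2s+1,3s+1)$, $[3s+1,4s+1)$, $[4s+1,5s+1)$, $[5s+1,6s+1)$, $[6s+1,7s+1)$, $[7s+1,8s+1)$. This is the order $L_{5s+1},L_{2s+1},U_{2s+1},L_{4s},U_{5s+1},L_{3s},U_{4s},U_{3s}$, and the intervals are pairwise disjoint. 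Three facts make the schedule work:
\begin{itemize}
\item The only server idle time is the unit processing of the short job.
\item The first machine is never idle.
\item The processing interval $[4s+1,6s+1)$ of the $4s$ job has length exactly $2s$ and absorbs $U_{5s+1}$ and $L_{3s}$.
\end{itemize}
This gives makespan $8s+1$. Since $E=14s+2$, Lemma~\ref{lem:basic_lb_s} yields $C^*=8s+1$, hence the ratio $(11s+2)/(8s+1)$, with no re-parametrisation needed.
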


\begin{proof}
Consider four jobs with the execution lengths \(5s+1,4s,3s,2s+1\).
The first two jobs form a bridge closure block ending at time \(6s+1\). The
third job is then processed from the resulting reset and completes at time
\(9s+1\). The final job is short and starts from this reset, giving \(C_{\mathrm{LPT}}=11s+2\).

A feasible schedule of length \(8s+1\) is obtained as follows. On the first
machine, process the jobs of lengths \(5s+1\) and \(3s\), starting at times
\(0\) and \(5s+1\). On the second machine, process the jobs of lengths
\(2s+1\) and \(4s\), starting at times \(s\) and \(3s+1\).

The corresponding server operations occur in the order
\[
\begin{aligned}
&L_{5s+1},\ L_{2s+1},\ U_{2s+1},\ L_{4s},\\
&U_{5s+1},\ L_{3s},\ U_{4s},\ U_{3s},
\end{aligned}
\]
and are pairwise nonoverlapping.

Since \(E=14s+2\), Lemma~\ref{lem:basic_lb_s} gives
\[
C^*\ge\frac{E+2s}{2}=8s+1.
\]
The constructed schedule attains this bound. Hence
\[
\frac{C_{\mathrm{LPT}}}{C^*}
=
\frac{11s+2}{8s+1}.
\]
\end{proof}

Combining Theorem~\ref{thm:lpt-unrestricted-upper} and
Proposition~\ref{prop:lpt-lower} gives
\[
\boxed{
\frac{11s+2}{8s+1}
\le
\rho_{\mathrm{LPT}}(s)
\le
\frac{13s-2}{8s},
\qquad s\ge2.
}
\]

The exact long-job result shows that the deterioration from the unit-time ratio
cannot be attributed solely to short jobs. Allowing short jobs introduces an
additional structural difficulty, namely serial tails following short resets,
and leads to the weaker unrestricted guarantee.

\section{Conclusion}
\label{sec:conclusion}

This paper studied the two-machine scheduling problem with one common
loading--unloading server in the equal server-time case, i.e., problem 
\(P2,S1\mid s_j=t_j=s\mid C_{\max}\). Each job forms a no-idle
loading--processing--unloading block, so the common server is required both before
and immediately after processing. Even under the symmetric assumption
\(s_j=t_j=s\), these two server operations create blocking patterns that are absent
from the classical identical-parallel-machine problem.

On the complexity side, we proved that the decision version is NP-complete for
every fixed integer \(s\ge1\), and strongly NP-complete when the common
loading--unloading time \(s\) is part of the input. The fixed-\(s\) result does not
determine whether the problem admits a pseudo-polynomial algorithm or is strongly
NP-hard for some fixed value of \(s\).

For the performance analysis we focused on \(s\ge2\). Ordinary list scheduling has a 
tight supremum ratio \(\rho_{\mathrm{LS}}(s)=2\). For Algoithm LPT, the analysis reveals two
distinct effects. When every processing time satisfies \(p_j\ge s\), the exact ratio is
\(\rho_{\mathrm{LPT}}^{\ge s}(s)=(11s-3)/(8s-2)\).
Thus, the departure from the unit-time result already occurs within the long-job
subclass. When short jobs are allowed, a first short reset can generate a serial
terminal portion of the schedule. Combining the resulting prefix--tail structure with
the machine-load and short-job lower bounds gives
\(\rho_{\mathrm{LPT}}(s)\le(13s-2)/(8s)\), while the four-job construction yields
\(\rho_{\mathrm{LPT}}(s)\ge(11s+2)/(8s+1)\). Consequently,
\[
\frac{11s+2}{8s+1}
\le
\rho_{\mathrm{LPT}}(s)
\le
\frac{13s-2}{8s},
\qquad s\ge2.
\]

The most immediate approximation question is to close this remaining LPT gap. As
\(s\to\infty\), the present lower and upper bounds tend to \(11/8\) and \(13/8\),
respectively, whereas the exact long-job ratio tends to \(11/8\). Determining
whether short jobs genuinely increase the asymptotic worst-case ratio, and finding
the exact value of \(\rho_{\mathrm{LPT}}(s)\) for fixed \(s\ge2\), remain open.
Further work may also examine whether the block and frontier arguments extend to
more than two machines, structured unequal loading and unloading times, separate
loading and unloading servers, or models that permit waiting between processing
and unloading.

\appendix

\section{Proof of the long-job LPT bound}
\label{app:lpt-long-proof}

This appendix proves the upper bound in
Theorem~\ref{thm:lpt-long-jobs} for \(s\ge2\). Set
\[
r_s=\frac{11s-3}{8s-2},
\qquad
\alpha_s=\frac{2}{r_s}=\frac{16s-4}{11s-3}.
\tag{A.1}
\label{eq:lpt-alpha}
\]
For \(s\ge2\), we have \(1<\alpha_s<2\).

Decompose the LPT schedule into the blocks defined in
Section~\ref{subsec:lpt-block-structure}. If the terminal block is a closure
block, include it in \(P\) and set \(U=\varnothing\). Otherwise, let \(U\)
denote the terminal block and let \(P\) be the union of all preceding blocks.
Thus, \(U\) is either a one-job terminal block or a nonclosed terminal block.
Write
\[
X=L(P),
\qquad
Y=E(P).
\]

For a block \(Q\), define its \(\alpha_s\)-surplus by
\[
S(Q)=E(Q)-\alpha_sL(Q).
\]
The proof has three ingredients. Equality closure blocks, bridge closure blocks
containing at least three jobs, and waiting blocks have a positive surplus. Only
two-job bridge closure blocks can have a deficit, and these deficits telescope
under the LPT order. The terminal block then determines which lower bound closes
the proof.

\subsection{Nonterminal block estimates}
\label{appsub:lpt-block-estimates}

Every nonterminal block contains at least two jobs. Indeed, after one long job
of length \(a\), the next long job of length \(b\le a\) either starts at time
\(s\), when \(a-b\ge2s\), or bridges around the first unloading, when
\(a-b<2s\). It cannot wait for a new reset.

We use four disjoint block classes.

\paragraph{Equality closure blocks.}
An equality closure block has \(I=2s\). If it contains two jobs, their lengths
differ by \(2s\), so \(E\ge8s\). If it contains at least three jobs, then
\(E\ge9s\), and hence \(E\ge8s\) remains valid. Therefore, we get 
\[
\begin{aligned}
S(Q)
&=\frac{(2-\alpha_s)E-\alpha_s I}{2}\\
&\ge
\frac{(2-\alpha_s)8s-\alpha_s(2s)}2\\
&=
\frac{4s(2s-1)}{11s-3}.
\end{aligned}
\tag{A.2}
\label{eq:lpt-equality-surplus}
\]

\paragraph{Two-job bridge closure blocks.}
Such a block can be written as
\[
a=b+d,
\qquad
0\le d\le2s-1.
\]
Its length and execution volume are
\[
L=b+d+s,
\qquad
E=2b+d.
\tag{A.3}
\label{eq:lpt-two-job-bridge}
\]
These are the only blocks that may have a negative surplus.

\paragraph{Bridge closure blocks containing at least three jobs.}
The first two jobs cannot satisfy \(e_1-e_2\le2s\), because equality would
produce an equality closure and a smaller difference would produce a two-job
bridge closure. Thus,
\[
e_1-e_2\ge2s+1.
\]
Since all jobs are long,
\[
E\ge(5s+1)+3s+3s=11s+1.
\]
Using \(I\le4s-1\), we get 
\[
\begin{aligned}
S(Q)
&=\frac{(2-\alpha_s)E-\alpha_s I}{2}\\
&\ge
\frac{(2-\alpha_s)(11s+1)-\alpha_s(4s-1)}2\\
&=
\frac{s^2+8s-3}{11s-3}.
\end{aligned}
\tag{A.4}
\label{eq:lpt-large-block-surplus}
\]

\paragraph{Waiting blocks.}
A waiting block followed by a long job ends in a clean state with
\[
s<\Delta<2s.
\]
The second job of the block completes no earlier than time \(4s\), and later
insertions cannot decrease the earlier machine completion. Hence
\[
L\ge4s+\Delta.
\]
By \eqref{eq:lpt-clean-idle}, \(I=s+\Delta\), and therefore
\[
\begin{aligned}
S(Q)
&=(2-\alpha_s)L-I\\
&\ge
(2-\alpha_s)(4s+\Delta)-s-\Delta.
\end{aligned}
\]
The right-hand side decreases with \(\Delta\). Substituting
\(\Delta=2s-1\) gives
\[
S(Q)
\ge
\frac{(s+1)(3s-1)}{11s-3}.
\tag{A.5}
\label{eq:lpt-waiting-surplus}
\]

We shall also use a density estimate. Let \(q\ge3s\), and suppose that every
job in a collection of nonterminal blocks has an execution length at least \(q\).
Define
\[
\Gamma_s(q)
=
\frac{2q+2s-1}{q+3s-1}.
\tag{A.6}
\label{eq:lpt-gamma}
\]

\begin{lemma}
\label{lem:lpt-block-density}
Every block \(Q\) in this collection satisfies
\[
E(Q)\ge\Gamma_s(q)L(Q).
\]
\end{lemma}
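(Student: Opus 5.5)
The plan is to rewrite the density inequality as an upper bound on the idle content. Since $E(Q)=2L(Q)-I(Q)$ and
\[
2-\Gamma_s(q)=\frac{4s-1}{q+3s-1},
\]
the claim $E(Q)\ge\Gamma_s(q)L(Q)$ is equivalent to
\[
(4s-1)\,L(Q)\ge (q+3s-1)\,I(Q).
\]
Substituting $L=(E+I)/2$ instead gives the other useful form
\[
(4s-1)\,E(Q)\ge(2q+2s-1)\,I(Q).
\]
I will go through the four nonterminal block classes of Appendix~\ref{appsub:lpt-block-estimates}. In each case I use the idle identities of Lemma~\ref{lem:lpt-frontier-invariant} together with the hypothesis that every job has length at least $q\ge3s$.

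\emph{Two-job bridge closure blocks.} This is the tight case. By (A.3), $L=b+d+s$ with $b\ge q$ and $0\le d\le2s-1$, and $I=2s+d$. It then suffices to show
\[
(q+d+s)(4s-1)\ge(2s+d)(q+3s-1).
\]
Expanding the difference gives
\[
q(2s-1-d)+s(d-2s+1)=(2s-1-d)(q-s),
\]
which is nonnegative because $d\le2s-1$ and $q\ge3s$.

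\emph{Equality closure blocks and bridge closure blocks with at least three jobs.} Here I use the $E$-form. For equality closures, $I=2s$; the two lengths differ by $2s$ and the smaller one is at least $q$, so $E\ge2q+2s$. Then $(4s-1)(2q+2s)\ge 2s(2q+2s-1)$ holds trivially. For bridge closures with at least three jobs, $I\le4s-1$ by \eqref{eq:lpt-closure-idle}, and $E\ge3q\ge2q+3s>2q+2s-1$. This gives the required inequality immediately.

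\emph{Waiting blocks.} By \eqref{eq:lpt-clean-idle} the block ends in a clean state with $I=s+\Delta=:x\le3s-1$, since it is followed by a long job and so $\Delta<2s$. The earlier machine completion satisfies $c_1\ge s+q$, because the second job starts at time $s$ or later and has length at least $q$. Hence $L\ge c_2=c_1+\Delta\ge q+x$. Then
\[
(q+x)(4s-1)-x(q+3s-1)=q(4s-1-x)+xs\ge0,
\]
since $x\le 3s-1$.

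The main obstacle is not any of these computations but the case bookkeeping. I must make sure these four classes exhaust the nonterminal blocks; this relies on the fact, established at the start of the subsection, that a one-job block cannot be nonterminal. I also need the lower bound $c_1\ge s+q$ to survive later insertions, which holds because insertions never decrease the earlier completion. The two-job bridge is where $\Gamma_s(q)$ is attained, at $d=2s-1$ or $q=s$, so that inequality has to be checked exactly rather than through crude estimates.
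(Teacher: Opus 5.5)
Your proof is correct and follows essentially the paper's route. You use the same four block classes, each settled by combining a lower bound on $E$ or $L$ with the idle identities of Lemma~\ref{lem:lpt-frontier-invariant}, and your reformulation $(4s-1)L\ge(q+3s-1)I$ is a uniform packaging of the paper's cross-multiplications via $E/L=2E/(E+I)$; your waiting-block bound $L\ge q+s+\Delta$ is slightly stronger than the paper's $E\ge 2q$, but either works.

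The only slip is that equality closure blocks can contain three or more jobs, where ``the two lengths differ by $2s$'' does not apply. There $E\ge 3q\ge 2q+2s$ holds anyway, and in fact $E\ge 2q$ already suffices in your $E$-form, since $(4s-1)\,2q-2s(2q+2s-1)=2(2s-1)(q-s)\ge0$.
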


\begin{proof}
For a two-job bridge block, we have 
\[
\frac{E}{L}=\frac{2b+d}{b+d+s}.
\]
This ratio increases with \(b\) and decreases with \(d\). Since \(b\ge q\)
and \(d\le2s-1\),
\[
\frac{E}{L}
\ge
\frac{2q+2s-1}{q+3s-1}
=
\Gamma_s(q).
\]

For an equality closure block, \(E\ge2q\) and \(I=2s\), so
\[
\frac{E}{L}
=\frac{2E}{E+I}
\ge
\frac{2q}{q+s}
\ge
\Gamma_s(q).
\]

For a bridge closure block containing at least three jobs,
\(E\ge3q\) and \(I\le4s-1\). Hence
\[
\frac{E}{L}
\ge
\frac{6q}{3q+4s-1}
\ge
\Gamma_s(q).
\]

For a waiting block, \(E\ge2q\) and \(I\le3s-1\), and therefore
\[
\frac{E}{L}
\ge
\frac{4q}{2q+3s-1}
\ge
\Gamma_s(q).
\]
Each comparison follows by cross multiplication from \(q\ge3s\).
\end{proof}

\subsection{Amortization of bridge deficits}
\label{appsub:lpt-bridge-amortisation}

For a two-job bridge block in \eqref{eq:lpt-two-job-bridge}, define its deficit
by
\[
D(b,d)
=
\alpha_sL-E
=
(\alpha_s-2)b+(\alpha_s-1)d+\alpha_ss.
\tag{A.7}
\label{eq:lpt-bridge-deficit}
\]
Let
\[
\varepsilon_s
=
\frac{2s(s-1)}{11s-3},
\qquad
\delta_s(b)=D(b,2s-1).
\]
Since \(b\ge3s\), we get 
\[
D(b,d)
\le
(2-\alpha_s)d-\varepsilon_s.
\tag{A.8}
\label{eq:lpt-bridge-step}
\]
Indeed, we have
\[
\begin{aligned}
&(2-\alpha_s)d-\varepsilon_s-D(b,d)\\
&\quad=
\frac{(6s-2)(b-3s)+(s-1)d}{11s-3}
\ge0.
\end{aligned}
\]

List the two-job bridge blocks chronologically as
\[
(b_i,d_i),
\qquad i=1,\ldots,h.
\]
The larger job of block \(i+1\) follows the smaller job of block \(i\) in the
LPT order. Therefore,
\[
b_{i+1}+d_{i+1}\le b_i,
\]
and hence
\[
d_{i+1}\le b_i-b_{i+1}.
\tag{A.9}
\label{eq:lpt-bridge-order}
\]

For the first bridge,
\[
D(b_1,d_1)\le\delta_s(b_1).
\]
For \(i\ge2\), \eqref{eq:lpt-bridge-step} and
\eqref{eq:lpt-bridge-order} give
\[
\begin{aligned}
D(b_i,d_i)
&\le
(2-\alpha_s)(b_{i-1}-b_i)-\varepsilon_s\\
&=
\delta_s(b_i)-\delta_s(b_{i-1})-\varepsilon_s,
\end{aligned}
\]
because \(\delta_s\) has the slope \(\alpha_s-2\). Summing up these inequalities
yields
\[
\begin{aligned}
\sum_{i=1}^{h}D(b_i,d_i)
&\le
\delta_s(b_1)
+
\sum_{i=2}^{h}
\left[
\delta_s(b_i)-\delta_s(b_{i-1})-\varepsilon_s
\right]\\
&=
\delta_s(b_h)-(h-1)\varepsilon_s.
\end{aligned}
\tag{A.10}
\label{eq:lpt-bridge-telescope}
\]

Since \(\delta_s\) decreases with \(b\) and \(b_h\ge3s\), we get 
\[
\sum_{i=1}^{h}D(b_i,d_i)
\le
\delta_s(3s)
=
\frac{8s^2-5s+1}{11s-3}
<2s.
\tag{A.11}
\label{eq:lpt-total-bridge-deficit}
\]

\subsection{The terminal block}
\label{appsub:lpt-terminal-block}

We distinguish three cases according to the number of jobs in \(U\).

\paragraph{Case 1: \(U=\varnothing\).}

Every positive deficit is generated by a two-job bridge block. By
\eqref{eq:lpt-total-bridge-deficit},
\[
\alpha_sX-Y<2s,
\]
and hence
\[
Y+2s\ge\alpha_sX.
\]
Lemma~\ref{lem:basic_lb_s} gives
\[
C^*
\ge
\frac{Y+2s}{2}
\ge
\frac{\alpha_s}{2}C_{\mathrm{LPT}}
=
\frac1{r_s}C_{\mathrm{LPT}}.
\]

\paragraph{Case 2: \(U\) contains one job.}

Let its execution length be \(e\). If \(P\) is empty, LPT is optimal. Assume
\(P\ne\varnothing\). Every job in \(P\) has an execution length at least \(e\).
Set
\[
R=(\alpha_s-1)e-2s.
\tag{A.12}
\label{eq:lpt-terminal-requirement}
\]
The machine-load argument closes if
\[
Y-\alpha_sX\ge R.
\tag{A.13}
\label{eq:lpt-terminal-payment}
\]

Suppose first that \(P\) contains no two-job bridge block. Every block in
\(P\) has a positive surplus by
\eqref{eq:lpt-equality-surplus}--\eqref{eq:lpt-waiting-surplus}. In addition,
each block has a surplus at least \(R\).

For an equality closure block,
\[
E\ge2e+2s,
\qquad
I=2s.
\]
After subtracting \(R\), its surplus is at least
\[
\frac{e(s-1)+12s^2-4s}{11s-3}>0.
\]

For a waiting block with final gap \(\Delta\),
\[
E\ge2e+s+\Delta,
\qquad
I=s+\Delta.
\]
The residual is minimized at \(\Delta=2s-1\) and is at least
\[
\frac{e(s-1)+7s^2+2s-1}{11s-3}>0.
\]

For a bridge closure block containing at least three jobs,
\[
E\ge3e+2s+1,
\qquad
I\le4s-1,
\]
and the residual is at least
\[
\frac{e(4s-2)-4s^2+11s-3}{11s-3}>0.
\]

If \(R\le0\), the total surplus of \(P\) is non-negative and therefore at
least \(R\). If \(R>0\), any one block supplies at least \(R\), while every
remaining block has a non-negative surplus. Thus,
\eqref{eq:lpt-terminal-payment} holds.

Suppose next that \(P\) contains \(h\ge2\) bridge blocks. Let
\((b_h,d_h)\) be the final one. Since \(e\le b_h\), we get 
\[
\begin{aligned}
\alpha_sX-Y+R
&\le
\delta_s(b_h)-(h-1)\varepsilon_s
 +(\alpha_s-1)b_h-2s\\
&=
\frac{(s-1)(4s-1-b_h)}{11s-3}
 -(h-1)\varepsilon_s\\
&<0.
\end{aligned}
\]
Hence \eqref{eq:lpt-terminal-payment} again holds.

It remains to consider exactly one bridge block. Its deficit together with the
terminal requirement satisfies
\[
\begin{aligned}
D(b,d)+R
&\le
D(b,2s-1)+(\alpha_s-1)b-2s\\
&=
\frac{(s-1)(4s-1-b)}{11s-3}\\
&\le
\frac{(s-1)^2}{11s-3}.
\end{aligned}
\tag{A.14}
\label{eq:lpt-single-bridge-residual}
\]
Every nonbridge block has a surplus strictly larger than the final quantity in
\eqref{eq:lpt-single-bridge-residual}. The only unresolved configuration is
therefore a three-job schedule consisting of one bridge block followed by the
terminal job.

Write the execution lengths as
\[
a=b+d,
\qquad
b\ge e\ge3s,
\qquad
0\le d\le2s-1.
\]
Algorithm LPT gives the makespan
\[
C_{\mathrm{LPT}}=b+d+s+e.
\tag{A.15}
\label{eq:lpt-three-job-lpt}
\]

\begin{lemma}
\label{lem:lpt-three-job-lbs}
Every feasible schedule for these three jobs satisfies
\[
C^*\ge b+e
\tag{A.16}
\label{eq:lpt-three-job-machine-lb}
\]
and
\[
C^*\ge a+2s.
\tag{A.17}
\label{eq:lpt-three-job-server-lb}
\]
\end{lemma}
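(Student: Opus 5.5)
The plan is to derive both inequalities from elementary counting: a pigeonhole argument on machine assignment for \eqref{eq:lpt-three-job-machine-lb}, and an argument about the server timeline for \eqref{eq:lpt-three-job-server-lb}.

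\emph{Bound \eqref{eq:lpt-three-job-machine-lb}.} Three jobs are distributed over two machines, so some machine receives at least two of them. Job blocks on one machine cannot overlap, so that machine is busy for at least the sum of the two execution lengths it carries. Because \(a\ge b\ge e\), the smallest possible sum of two distinct execution lengths is \(b+e\). Hence \(C^*\ge b+e\).

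\emph{Bound \eqref{eq:lpt-three-job-server-lb}.} Fix the job of length \(a\), which occupies an interval \([\sigma,\sigma+a)\). I will show that the whole schedule has length at least \(a+2s\) by splitting into cases.

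If \(a\) is not the first job loaded, then some other loading operation occupies the server for \(s\) time units before \(\sigma\). In that case \(\sigma\ge s\). Symmetrically, if \(a\) is not the last job unloaded, some other unloading ends at least \(s\) after \(\sigma+a\).

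The bound therefore follows whenever \(a\) is neither first loaded nor last unloaded. So it remains to treat the case where \(a\) is first loaded, or last unloaded, or both. The reversal of time maps feasible schedules to feasible schedules and swaps loadings with unloadings. By this symmetry, it suffices to treat the case in which \(a\) is loaded first.

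Suppose \(a\) is loaded first, so \(\sigma=0\). The other two jobs \(b\) and \(e\) are both long. I consider where they are placed.

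\begin{itemize}
\item If both lie on the other machine, they must fit in sequence. If they finish within \([s,a-s]\), they need \(b+e\ge 6s>a-2s\), because \(a\le b+2s-1\) and \(e\ge3s\). So they cannot both finish inside, and the schedule extends past \(a\).
\item The key quantity is the total server time: there are six operations of length \(s\), giving \(6s\). Together with the no-overlap structure, this forces the makespan beyond \(a+2s\) in most placements.
\end{itemize}

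Concretely, if the final unloading is not that of \(a\), then the bound already follows from the first paragraph's argument, since \(\sigma=0\) plus the later unloading gives \(\ge a+s\). This falls short by \(s\), so more care is needed here.

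The cleaner route is this. If some other job's unloading ends after \(a\)'s, consider the machine opposite to \(a\). The job \(b\) or \(e\) that finishes last on that machine must end at time at least \(a+s\). Its block has length at least \(3s\), and it must straddle the unloading of \(a\).

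\emph{Main obstacle.} The hardest step is this straddling case. Here I must use the fact that \(a-b\le 2s-1\) together with the requirement that both \(b\) and \(e\) be placed. The idea is to argue that whichever of \(b,e\) is processed before the straddling job lies on the opposite machine forces extra delay. There are two sub-cases.

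\begin{itemize}
\item If that earlier job is also on the opposite machine, then \(b+e+s\) already exceeds \(a+2s\).
\item If instead it sits after \(a\) on the same machine as \(a\), then the makespan is at least \(a+e\ge a+3s\).
\end{itemize}

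In either sub-case the bound \(a+2s\) holds, which completes the proof of \eqref{eq:lpt-three-job-server-lb}.
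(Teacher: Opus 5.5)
Your argument for (A.16) is the same pigeonhole argument the paper uses. For (A.17) your proof is correct in substance, but it takes a somewhat different route.

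\textbf{Comparison with the paper.} The paper first asks whether $a$ shares its machine; if so, $C^*\ge a+e$ immediately. When $a$ is alone, it treats three cases directly:
\begin{itemize}
\item $a$ loaded first gives $C^*\ge s+b+e$;
\item $a$ neither loaded first nor unloaded last gives $C^*\ge s+a+s$;
\item $a$ unloaded last gives $C^*\ge b+e+s$, since its unloading must follow the last unloading on the other machine.
\end{itemize}
You instead split first on whether $a$ is loaded first or unloaded last. You then use the reflection $t\mapsto C-t$ to turn the ``unloaded last'' case into the ``loaded first'' case. This is legitimate here: loading and unloading both have length $s$, so the reflected schedule is feasible, its makespan is at most $C$, and its loading order is the reverse of the original unloading order. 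It saves the separate bound $b+e+s$, at the cost of a symmetry argument that depends on $s_j=t_j$. The paper's version is entirely direct.

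\textbf{Statements to delete in the ``loaded first'' case.} This part of your write-up is far more tangled than necessary and contains claims that are wrong or unsupported.
\begin{itemize}
\item The chain $b+e\ge 6s>a-2s$ fails whenever $a\ge 8s$. What actually holds is $a-2s\le b-1<b+e$.
\item The bullet about total server time (``in most placements'') is not an argument.
\item The claim that the last job on the machine opposite $a$ ends no earlier than $a+s$ and straddles the unloading of $a$ is unjustified and false in general. Take $a=b=5s$ and $e=3s$, with $b$ following $a$ on one machine and $e$ run in $[s,4s)$ on the other. This schedule is feasible, and $e$ ends at $4s<a+s$. Also, if both $b$ and $e$ follow $a$, the opposite machine is empty.
\end{itemize}

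\textbf{What actually closes the case.} None of the above is needed, and there is no real ``main obstacle.'' Once $a$ is loaded first, any job sharing its machine comes after it, so $C\ge a+3s$. Otherwise $b$ and $e$ both run on the other machine, which cannot start before $s$. This gives $C\ge s+b+e\ge a+2s$, because $e\ge 3s>d+s$. Your two final sub-cases say exactly this, so the straddling framing and the separate discussion of $a$ being both first loaded and last unloaded can be dropped.
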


\begin{proof}
Two jobs must share one machine. Since \(a\ge b\ge e\), their combined
execution length is at least \(b+e\), proving
\eqref{eq:lpt-three-job-machine-lb}.

If \(a\) shares a machine with another job, then
\[
C^*\ge a+e>a+2s.
\]
Suppose therefore that \(a\) is alone on one machine and \(b,e\) share the
other one.

If \(a\) is loaded first, the other machine cannot begin before time \(s\),
and
\[
C^*\ge s+b+e.
\]
Since \(a=b+d\) and \(e\ge3s>d+s\), we have
\(s+b+e\ge a+2s\).

Suppose \(a\) is not loaded first. Its loading cannot begin before time \(s\).
If another unloading follows the unloading of \(a\), then
\[
C^*\ge s+a+s=a+2s.
\]
If the unloading of \(a\) is the last one, it follows the final unloading on the
machine processing \(b\) and \(e\), and hence
\[
C^*\ge b+e+s\ge a+2s.
\]
This proves \eqref{eq:lpt-three-job-server-lb}.
\end{proof}

If \(e\ge d+2s\), then
\[
\begin{aligned}
\frac{C_{\mathrm{LPT}}}{C^*}
&\le
1+\frac{d+s}{b+e}\\
&\le
1+\frac{d+s}{2d+4s}\\
&\le
\frac{11s-3}{8s-2},
\end{aligned}
\]
where the middle expression is increasing in \(d\) and is maximized at
\(d=2s-1\).

If \(e\le d+2s-1\), then
\[
\frac{C_{\mathrm{LPT}}}{C^*}
\le
1+\frac{e-s}{b+d+2s}.
\]
For fixed \(d\), the right-hand side is maximised by taking \(b=e\), and it
then increases with \(e\). Therefore, 
\[
\frac{C_{\mathrm{LPT}}}{C^*}
\le
1+\frac{d+s-1}{2d+4s-1}
<
\frac{11s-3}{8s-2}.
\]

\paragraph{Case 3: \(U\) contains at least two jobs.}

Suppose first that \(P\) is empty. Let \(a\) be the first job of \(U\). If
\(a\) remains the last-completing job, then
\[
C_{\mathrm{LPT}}=a=\max_j e_j,
\]
and LPT is optimal.

Otherwise, let \(J_\ell\) be the later-inserted job that completes last, let
\(e_\ell\) be its execution length, and let \(T\) be its loading start time. In an
unfinished long-job block the second insertion cannot overtake the first; if it did,
the first two jobs would close the block. Hence \(J_\ell\) is a third-or-later
nonclosing insertion and is made from a clean frontier. The job is appended to a machine
already containing a job of an execution length at least \(e_\ell\), and therefore
\[
T\ge e_\ell.
\]
It starts exactly when its selected machine becomes available; in the clean state
immediately preceding the insertion, this is the earlier machine completion, equal to
\(T\). By \eqref{eq:lpt-clean-idle}, the execution volume
already scheduled is at least \(2T-s\). Consequently,
\[
E\ge2T-s+e_\ell.
\]
Since \(C_{\mathrm{LPT}}=T+e_\ell\), Lemma~\ref{lem:basic_lb_s} gives
\[
\frac{C_{\mathrm{LPT}}}{C^*}
\le
\frac{2(T+e_\ell)}{2T+e_\ell+s}
\le
\frac43
\le
r_s.
\]

Assume now that \(P\ne\varnothing\), and let \(q\) be the smallest execution
length among the jobs in \(P\). Lemma~\ref{lem:lpt-block-density} gives
\[
Y\ge\Gamma_s(q)X.
\tag{A.18}
\label{eq:lpt-prefix-density}
\]
The first block in \(P\) contains at least two jobs. Its second job cannot
complete before time \(q+s\), and therefore, 
\[
X\ge q+s.
\tag{A.19}
\label{eq:lpt-prefix-length}
\]

Let \(a\ge b\) be the first two execution lengths in \(U\). Since \(U\)
neither closes nor becomes a waiting block after these jobs,
\[
a-b\ge2s+1.
\]
As every job in \(P\) precedes every job in \(U\) in the LPT order,
\[
q\ge a\ge5s+1.
\tag{A.20}
\label{eq:lpt-q-lower}
\]

Let \(H=L(U)\),  \(E_U=E(U)\), and  \(g\) be the final difference
between the two machine-completion times. Immediately after the first two jobs
of \(U\), the gap is
\[
a-b-s\le q-4s.
\]
Consider a later nonclosing insertion. If \(e<\Delta-s\), the new gap is
\[
\Delta-e<\Delta.
\]
If \(e>\Delta+s\), the new gap is \(e-\Delta\). In this case,
\(e\le b\) and \(\Delta\ge2s\), so
\[
e-\Delta
\le
b-2s
\le
a-4s-1
\le q-4s-1.
\]
By induction, we have
\[
g\le q-4s.
\tag{A.21}
\label{eq:lpt-final-gap}
\]

The terminal block ends in a clean state and therefore,
\[
E_U=2H-s-g.
\tag{A.22}
\label{eq:lpt-terminal-volume}
\]
Let \(J_\ell\) be the last-completing job in \(U\). If \(J_\ell\) is the
first job of \(U\), then its execution length is \(H\), and hence at least
\(g\). Otherwise, \(J_\ell\) is introduced by a nonclosing insertion from a
clean state. Its loading starts no later than the final completion \(H-g\) of
the other machine, so its execution length is again at least \(g\). Since
\(U\) contains at least one additional long job,
\[
E_U\ge g+3s.
\]
Together with \eqref{eq:lpt-terminal-volume}, this yields
\[
H\ge g+2s.
\tag{A.23}
\label{eq:lpt-terminal-length}
\]

We now verify that the coefficient multiplying \(X\) is positive:
\[
\Gamma_s(q)-\alpha_s
=
\frac{(6s-2)q-26s^2+11s-1}
     {(11s-3)(q+3s-1)}.
\tag{A.24}
\label{eq:lpt-gamma-positive}
\]
At \(q=5s+1\), the numerator is
\[
4s^2+7s-3>0,
\]
and it increases with \(q\). Therefore, 
\[
\Gamma_s(q)-\alpha_s>0.
\]

At this point
\[
\Gamma_s(q)-\alpha_s>0,
\qquad 2-\alpha_s>0,
\qquad 1-\alpha_s<0.
\]
Hence a lower bound on \(X\), a lower bound on \(H\), and an upper bound on
\(g\) all move the following expression in the required direction. Using
\eqref{eq:lpt-prefix-density}, \eqref{eq:lpt-prefix-length},
\eqref{eq:lpt-final-gap}, and \eqref{eq:lpt-terminal-length}, we obtain
\[
\begin{aligned}
&Y+E_U+2s-\alpha_s(X+H)\\
&\quad\ge
\bigl(\Gamma_s(q)-\alpha_s\bigr)(q+s)\\
&\qquad
+(1-\alpha_s)(q-4s)
+(5-2\alpha_s)s.
\end{aligned}
\tag{A.25}
\label{eq:lpt-final-surplus}
\]
Multiplying the right-hand side by the positive quantity
\((11s-3)(q+3s-1)\) gives
\[
\begin{aligned}
F_s(q)
={}&(s-1)q^2
 +(8s^2+6s-2)q\\
&+103s^3-65s^2+10s.
\end{aligned}
\]
For \(s\ge2\), \(F_s(q)\) is increasing in \(q\), and
\[
F_s(5s+1)
=
168s^3-42s^2-3s-3
>0.
\]
Therefore, we get 
\[
E+2s\ge\alpha_sC_{\mathrm{LPT}}.
\]
Applying Lemma~\ref{lem:basic_lb_s},
\[
C^*
\ge
\frac{E+2s}{2}
\ge
\frac{\alpha_s}{2}C_{\mathrm{LPT}}
=
\frac1{r_s}C_{\mathrm{LPT}}.
\]

The three cases establish
\[
C_{\mathrm{LPT}}\le r_sC^*,
\]
and complete the proof of Theorem~\ref{thm:lpt-long-jobs}.

\end{document}